\newtheorem{theorem}{Theorem}
\newtheorem{lemma}{Lemma}
\newtheorem{proposition}{Proposition}
\newtheorem{assumption}{Assumption}
\theoremstyle{remark}
\newtheorem{remark}{Remark}
\def\Var{\mathrm{Var}}
\def\1{\mathbf{1}}
\def\ve{\varepsilon}
\def\hve{\hat{\ve}}
\def\hZ{\hat{Z}}
\def\shat{\hat{\sigma}}
\def\hKtheta{\widehat{K\theta}}
\def\hg{\hat{g}}
\def\hphig{\hat{\phi}_g}
\def\ivec{\mathbf{i}}
\def\idot{\ivec_{\bullet}}
\def\uvec{\mathbf{u}}
\def\xvec{\mathbf{x}}
\def\kvec{\mathbf{k}}
\def\Rhat{\hat{R}}
\def\hF{\hat{\mathbb{F}}}
\def\opn{o_P(n^{-1/2})}
\def\RR{\mathbb{R}}
\def\CC{\mathscr{C}}
\def\Zint{\mathbb{Z}}
\def\Lclass{\mathcal{L}}
\def\WW{\mathcal{W}}
\def\MM{\mathcal{M}}
\numberwithin{equation}{section}
\begin{document}

\title[GOF testing the distribution of errors]{Goodness-of-fit testing
the error distribution in multivariate indirect regression}

\begin{abstract}
We propose a goodness-of-fit test for the distribution of errors from
a multivariate indirect regression model. The test statistic is based
on the Khmaladze transformation of the empirical process of
standardized residuals. This goodness-of-fit test is consistent at the
root-$n$ rate of convergence, and the test can maintain power against
local alternatives converging to the null at a root-$n$ rate.
\end{abstract}

\author[J.\ Chown, N.\ Bissantz and H.\ Dette]{Justin Chown$^*$,
  Nicolai Bissantz and Holger Dette}

\thanks{$^*$ Correspondences may be addressed to Justin Chown
  (justin.chown@ruhr-uni-bochum.de). \\
  {\em Ruhr-Universit\"at Bochum,
    Fakult\"at f\"ur Mathematik, Lehrstuhl f\"ur Stochastik, 44780
    Bochum, DE}}

\maketitle

\noindent {\em Keywords:}
hypothesis testing, indirect regression,
inverse problems, multivariate regression,
regularization
\bigskip

\noindent{\itshape 2010 AMS Subject Classifications:}
Primary: 62G10, 62G30;
Secondary: 62G05, 62G08.


\section{Introduction}
\label{intro}

A common problem faced in applications is that one can only make
indirect observations of a physical process. Consequently, important
quantities of interest cannot be directly observed, but a suitable
image under some transformation is typically available. These
problems are  called inverse problems in the literature. Loosely
speaking, the goal is to recover a quantity $\theta$ (often a
function) from a distorted version of an image  $ K\theta $, where $K$
is some operator. Developing valid statistical inference procedures
for these inverse problems is desirable, and in recent years several
authors have worked on the construction of estimators, structural
tests, and (pointwise and uniform) confidence bands for the unknown
indirect regression function $\theta$ [see Mair and Ruymgaart (1996),
Cavalier and Tsybakov (2002), Johnstone et al.\ (2004),
Bissantz and Holzmann (2008), Cavalier (2008), Birke et al.\ (2010),
Johnstone and Paul (2014), Marteau and Math\'e (2014), and Proksch et
al.\ (2015) among many others]. In this paper we consider an indirect
regression model of the form
\begin{equation} \label{modeleq}
Y_j = \big[K\theta\big](X_j) + \ve_j,
 \qquad j = 1,\ldots,n,
\end{equation}
where $X_j$ is a predictor, $\ve_j$ is a random error and $K$ is a
convolution operator, which will be specified later (along with the
covariates $X_j$). Here $\theta$ is an unknown but square-integrable
smooth function. We study a unified approach to testing certain model
assumptions regarding the distribution function of the error $\ve_j$
in the indirect regression model \eqref{modeleq}.

Apart from specification of the operator $K$, many statistical
techniques used in applications for the estimation of $\theta$ depend
on the error distribution. For example, when recovering astronomical
images certain defects such as cosmic-ray hits are important to
identify and remove [Section $6$ of Adorf (1995)]. Here deviation
values between observations from pixels and an initial reconstruction
are calculated and compared with the standard deviation of the
noise. A large deviation indicates the presence of a possible
cosmic-ray hit, and observations from the affected pixels are
discarded (or replaced by imputed values) in subsequent iterative
reconstruction procedures that improve the quality of the final
reconstructed image. Determining an unrealistic deviation depends on
the structure of the noise distribution. More recently, Bertero et
al. (2009) review maximum likelihood methods for reconstruction of
distorted images, and, in their Section 5.2 on deconvolution using
sparse representation, these authors note the popularity of assuming
an additive Gaussian white noise model for transformed data. However,
it is not known in advance whether this transformation is appropriate
for a given image. If the transformation is inappropriate, then we can
expect the Gaussian white noise model to also be inappropriate. The
purpose of this paper is to help in answering some of these questions,
which could be considered as goodness-of-fit hypotheses of specified
error distributions.

Problems of this type have found considerable interest in direct
regression models (this is the case where $K$ is an identity operator
and only $\theta$ appears in \eqref{modeleq}) [see Darling (1955),
Sukhatme (1972) or Durbin (1973) for some early works or del Barrio et
al.\ (2000) and Khmaladze and Koul (2004) for more recent references].
However, to the best of our knowledge the important case of testing
distributional assumptions regarding the error structure of an
indirect regression model of the form \eqref{modeleq} has not been
considered so far. We address this problem by proposing a test, which
is based on the empirical distribution function of the standardized
residuals from an estimate of the regression function. The method is based
on a projection principle introduced in the seminal papers of
Khmaladze  (1981, 1988). This projection is also called the Khmaladze
transformation and it has been well-studied in the
literature. Exemplarily, we mention the work of Marzec and Marzec
(1997), Stute et al.\ (1998), Khmaladze and Koul (2004, 2009), Haywood
and Khmaladze (2008), Dette and Hetzler (2009), Koul and Song (2010),
M\"uller et al.\ (2012), and Can et al.\ (2015), who use the Khmaladze
transform to construct goodness-fit-tests for various problems. The
work which is most similar in spirit to our work is the paper of Koul
et al.\ (2018), who consider a similar problem in linear measurement
error models.

We prefer the projection approach because there is a common asymptotic
distribution describing the large sample behavior of the test
statistics (without unknown parameters to be estimated) and the
procedure can be easily adapted to handle different problems. To obtain a
better understanding of projection principles as they relate to
forming model checks, we direct the reader to consider the rather
elaborate work of Bickel et al.\ (2006), who introduce a general
framework for constructing tests of general semiparametric hypotheses
that can be tailored to focus substantial power on important
alternatives. These authors investigate a so-called {\em score
process} obtained by a projection principle. Unfortunately, the
resulting test statistics are generally not {\em asymptotically
distribution free}, i.e.\ the asymptotic distributions of these test
statistics generally depend on unknown parameters and inference using
them becomes more complicated. The Khmaladze transform is simpler to
specify and easily employed in regression problems, since test
statistics obtained from the transformation are asymptotically
distribution free with (asymptotic) quantiles immediately available.

The article is organized as follows. A brief discussion of Sobolev
spaces and their appearance in statistical deconvolution problems is
given in Section \ref{estimates}. In this section we further propose
an estimator of the indirect regression function and study its
statistical properties. The proposed test statistic is introduced in
Section \ref{norm_test}. Finally, Section \ref{simulations} concludes
the article with a numerical study of the proposed testing procedure
and an application. The technical
details and proofs of our results can be found in Section
\ref{appendix}.


\section{Estimating smooth indirect regressions}
\label{estimates}

Consider the model \eqref{modeleq} with the operator $K$ specifying
convolution between an unknown but smooth function $\theta$ and a
known distortion function $\psi$ that characterizes $K$, i.e.
\begin{equation} \label{modeleq1}
\big[K\theta\big](X_j) = \int_{\CC} \theta(\uvec)
 \psi(X_j - \uvec)\,d\uvec.
\end{equation}
Here the covariates $X_j$ are random and have support $\CC = [0,1]^m $
for some   $m \geq 1$. The model errors $\ve_1,\ldots,\ve_n$ are
assumed to be independent with mean zero and common distribution
function $F$ admitting a Lebesgue density function, which is denoted
by $f$ throughout this paper. We also assume that $\ve_1,\ldots,\ve_n$
are independent of the  i.i.d.\ covariates $X_1,\ldots,X_n$.

Throughout this article we will assume that the indirect regression
function $\theta$ from \eqref{modeleq} is periodic and smooth in the
sense that $\theta$ belongs to the subspace of {\em periodic, weakly
differentiable} functions from the class of square integrable
functions $\Lclass_2(\CC)$ with support $\CC$; see Chapter 5 of Evans
(2010) for definitions and additional discussion. For   $d \in
\mathbb{N}$  let $I(d)$ be the set of multi-indices $\ivec =
(i_1,\ldots,i_m)$ satisfying $\idot = i_1 + \dots + i_m \leq d$. To be
precise, we will call a function $q \in \Lclass_2(\CC)$ weakly
differentiable in $\Lclass_2(\CC)$ of order $d$ when there is a
collection of functions $\{q^{(\ivec)} \in \Lclass_2(\CC)\}_{\ivec \in
I(d)}$ such that
\begin{equation*}
\int_\CC\,q(\uvec)D^{\ivec}\varphi(\uvec)\,d\uvec
 = (-1)^{\idot}\int_\CC\,q^{(\ivec)}(\uvec)\varphi(\uvec)\,d\uvec,
 \qquad \ivec \in I(d),
\end{equation*}
for every infinitely differentiable function $\varphi$, with $\varphi$
and $D^{\ivec}\varphi$, $\ivec \in I(d)$, vanishing at the boundary of
$\CC$ and writing
\begin{equation*}
D^{\ivec}\varphi(\xvec) =
 \frac{\partial^{\idot}}{\partial x_1^{i_1}\dots\partial x_m^{i_m}}
 \varphi(\xvec),
 \qquad \xvec \in \CC.
\end{equation*}
The class of weakly differentiable functions from
$\Lclass_2(\CC)$ of order $d$ forms the Sobolev space
\begin{equation*}
\WW^{d,2}(\CC) = \bigg\{ q \in \Lclass_2(\CC) \,:\,
 q^{(\ivec)} \in \Lclass_2(\CC),
 ~\ivec \in I(d) \bigg\}.
\end{equation*}

The periodic Sobolev space $\WW_{\text{per}}^{d,2}$ are those
functions from $\WW^{d,2}$ that are periodic on $\CC$ and whose weak
derivatives are also periodic on $\CC$. An orthonormal basis for the
space $\Lclass_2(\CC)$ of square integrable functions is given by the
Fourier basis $\{e^{i2\pi\kvec\cdot\xvec}\,:\,\xvec \in \CC\}_{\kvec
\in \Zint^m}$. Here $\kvec \cdot \xvec = k_1x_1 + \dots + k_mx_m$ is
the common inner product between the vectors $\kvec = (k_1,\ldots,k_m)
\in \Zint^m$ and $\xvec = (x_1,\ldots,x_m) \in \CC$. It follows that
$\WW_{\text{per}}^{d,2}$ can be equivalently represented by
\begin{equation*}
\WW_{\text{per}}^{d,2} = \bigg\{ q \in \WW^{d,2}(\CC)\,:\,
 \sum_{\kvec \in \Zint^m} \big( 1 + \|\kvec\|^2 \big)^d
 |\varrho(\kvec)|^2
 < \infty \bigg\},
\end{equation*}
where $\|\cdot\|$ denotes the Euclidean norm and
\begin{equation*}
\varrho(\kvec) = \int_{\CC}\,q(\xvec)e^{-i2\pi\kvec\cdot\xvec}\,d\xvec,
 \quad \kvec \in \mathbb{Z}^m
\end{equation*}
are the Fourier coefficients of $q$ [see K\"uhn et al.\ (2014) for
further discussion]. The series in the equivalent representation of
$\WW_{\text{per}}^{d,2}$ motivates replacing the degree of weak
differentiability $d$ by a real-valued smoothness index $s >
0$. Throughout this article we work with the general indirect
regression model space $\MM(s)$ defined as
\begin{equation} \label{Mdef}
\MM(s) = \bigg\{ q \in \WW_{\text{per}}^{s,2}\,:\,
 \sum_{\kvec \in \Zint^m} \|\kvec\|^s||\varrho(\kvec)| < \infty
 \bigg\}.
\end{equation}

We will assume that $\theta \in \mathcal{M}(s_0)$, for some $s_0$
specified below, and that $\psi \in \Lclass_2(\CC)$ such that $\psi$ is
positive-valued and integrates to $1$ so that $K$ is a
convolution operator from $\Lclass_2(\CC)$ into $\Lclass_2(\CC)$. In
this case we can represent $K\theta$ in terms of a Fourier series
\begin{equation} \label{theta_expansion}
K\theta(\xvec) = \sum_{\kvec \in \Zint^m} R(\kvec)
 \exp\big(i2\pi\kvec\cdot\xvec\big)
 = \sum_{\kvec \in \Zint^m} \Psi(\kvec)\Theta(\kvec)
 \exp\big(i2\pi\kvec\cdot\xvec\big),
 \qquad \xvec \in \CC,
\end{equation}
where $\{R(\kvec)\}_{\kvec \in \Zint^m}$ and $\{\Theta(\kvec)\}_{\kvec
\in \Zint^m}$ are the Fourier coefficients of $K\theta$ and $\theta$,
respectively. In particular we have
\begin{equation} \label{fourierrelation}
\Theta (\kvec) = \frac{R(\kvec)}{\Psi (\kvec)}
\quad\text{for all}~ \kvec \in \Zint^m.
\end{equation}

Studying the indirect regression model
\eqref{modeleq} requires that we consider the ill-posedness of the
inverse problem. This phenomenon occurs because the ratio
$|R(\kvec)|/|\Psi(\kvec)|$ needs to be summable when $\theta \in
\MM(s)$. However, when estimated Fourier coefficients
$\{\Rhat(\kvec)\}_{\kvec \in \Zint^m}$ are used $|\Rhat(\kvec)|$ does
not asymptotically vanish (with increasing $\|\kvec\|$) due to the
stochastic noise from the errors $\ve_j$ in model
\eqref{modeleq}. Consequently, the ratio
$|\Rhat(\kvec)|/|\Psi(\kvec)|$ is not necessarily summable, and this
problem is therefore called ill-posed. We can see that the
coefficients $\{\Psi(\kvec)\}_{\kvec \in \Zint^m}$ determine the rate
at which the ratio $|\Rhat(\kvec)|/|\Psi(\kvec)|$ expands, and,
therefore, the ill--posedness of the inverse problem here is given by
the rate of decay in the coefficients $\{\Psi(\kvec)\}_{\kvec \in
\Zint^m}$ of the distortion function $\psi$. We will assume that the
inverse problem is mildly to moderately ill-posed in the sense of Fan
(1991):
\begin{assumption} \label{assumpPsi}
There are finite constants $b \geq 0$, $\gamma > 0$ and $0 \leq C_{\Psi}
< C_{\Psi}^*$ such that, for every $\|\kvec\| > \gamma$, the Fourier
coefficients $\{\Psi(\kvec)\}_{\kvec \in \Zint^m}$ of the function
$\psi$ in \eqref{modeleq1}  satisfy $C_{\Psi} \leq
\|\kvec\|^b|\Psi(\kvec)| < C_{\Psi}^*$.
\end{assumption}

Under Assumption \ref{assumpPsi}, whenever $\theta \in \MM(s_0)$, for
some $s_0 > 0$, it follows that $K\theta \in \MM(s_0 + b)$ from the
celebrated convolution theorem for the Fourier transformation. This
means that convolution of the indirect regression $\theta$ with the
distortion function $\psi$ {\em adds smoothness}, and the resulting
distorted regression function $K\theta$ is now smoother than $\theta$
by exactly the degree of ill-posedness $b$ of the inverse
problem. Note that Assumption \ref{assumpPsi} is milder than that of
Fan (1991) in the sense that we allow the degree of ill-posedness $b =
0$ and that the scaled Fourier coefficients can vanish. This covers
the case of direct regression models where $K$ is the identity
operator, that is $K \theta =\theta$. Further note that we do not have
to invert the operator $K$ in order to investigate properties of the
error distribution in the indirect regression model \eqref{modeleq}.

Several techniques have been developed in the literature to derive
series-type estimators (see, for example, Cavalier, 2008). A popular
regularization method to employ is the so-called {\em spectral
cut-off} method, where an indicator function is introduced in
\eqref{theta_expansion}. For example, the indicator function
$\1[\|c_n\kvec\| \leq 1]$ (for some sequence $\{c_n\}_{n \geq 1}$
converging to $0$) results in a biased version of $K\theta$:
\begin{equation*}
(K\theta)_n(\xvec) = \sum_{\kvec \in \Zint^m\,:\,\|\kvec\| \leq c_n^{-1}}
 R(\kvec) \exp\big(i2\pi\kvec\cdot\xvec),
\quad \xvec \in \CC.
\end{equation*}
The proposed estimator is obtained by replacing the coefficients
$\{R(\kvec)\}_{\kvec \in \Zint^m}$ with consistent estimators
$\{\Rhat(\kvec)\}_{\kvec \in \Zint^m}$, which gives
\begin{equation*}
 \sum_{\kvec \in \Zint^m\,:\,\|\kvec\| \leq c_n^{-1}}
 \Rhat(\kvec) \exp\big(i2\pi\kvec\cdot\xvec\big),
\quad \xvec \in \CC,
\end{equation*}
as an estimator of $(K\theta)_n$. The sequence of smoothing parameters
$\{c_n\}_{n\geq 1}$ is chosen such that $K\theta$ is consistently
estimated. We can generalize this approach as follows.

Following Politis and Romano (1999) we consider  a Fourier smoothing kernel
$\Lambda$, where $\Lambda$ is defined to be the Fourier transformation
of some smoothing kernel function, say  $L_{\Lambda}$. The resulting
estimate is then defined by
\begin{equation} \label{khat}
\hKtheta(\xvec) = \sum_{\kvec \in \Zint^m} \Lambda(c_n\kvec)
 \Rhat(\kvec) \exp\big(i2\pi\kvec\cdot\xvec\big),
 \quad \xvec \in \CC.
\end{equation}
Another useful observation that Politis and Romano (1999) make is
the function $\xvec \mapsto c_n^{-m}L_{\Lambda}(c_n^{-1}\xvec)$ has
Fourier coefficients $\{\Lambda(c_n\kvec)\}_{\kvec \in
\Zint^m}$. Throughout this paper we will choose $\Lambda$ as follows:
\begin{assumption} \label{assumpLambda}
The Fourier smoothing kernel $\Lambda$ satisfies $\Lambda(\kvec) = 1$,
for $\|\kvec\| \leq 1$, $|\Lambda(\kvec)| \leq 1$, for $\|\kvec\|
> 1$, and $\int_{\RR^m}\,\|\uvec\||\Lambda(\uvec)|\,d\uvec < \infty$.
\end{assumption}

The random covariates $X_1,\ldots,X_n$ from model \eqref{modeleq} are
assumed to be independent with distribution function $G$. For
simplicity we will assume that $G$ satisfies the following properties.
\begin{assumption} \label{assump_g}
Let the covariate distribution function $G$ admit a positive Lebesgue
density function $g \in \Lclass_2(\CC)$ satisfying $\inf_{\xvec \in
\CC} g(\xvec) > 0$, $\sup_{\xvec \in \CC} g(\xvec) < \infty$ and that
$g \in \MM(s)$ for some $s > 0$.
\end{assumption}
The boundedness assumptions taken for $g$ are common in nonparametric
regression because these conditions guarantee good performance of
nonparametric function estimators. The last condition ensures that the
density function $g$ satisfies similar smoothness properties as the
indirect regression function $\theta$, which allows us to use a
Fourier series technique to specify a good estimator of $g$ (see, for
example, Politis and Romano, 1999).

What remains is to define the estimates $\{\Rhat(\kvec)\}_{\kvec \in
\Zint^m}$ of the Fourier coefficients $\{R(\kvec)\}_{\kvec \in
\Zint^m}$ required in the definition \eqref{khat}. Observing the
representation
\begin{equation*}
R(\kvec) = \int_{\CC}\,\big[ K\theta \big](\xvec)
 e^{-i 2\pi \kvec \cdot \xvec}\,d\xvec
 = E\bigg[ \frac{Y}{g(X)} e^{-i 2\pi \kvec \cdot X} \bigg],
 \quad \kvec \in \Zint^m,
\end{equation*}
the covariate density function $g$ must be estimated.
For this purpose we the expand the density function $g$ into its
Fourier series using the coefficients $\{\phi_g(\kvec)\}_{\kvec \in
\Zint^m}$, with $\phi_g(\kvec) = E[\exp( -i 2\pi \kvec \cdot X
)]$. Estimators of these coefficients are given by
\begin{equation*}
\hphig(\kvec) = \frac1n \sum_{j = 1}^n
 e^{-i 2\pi \kvec \cdot X_j},
 \quad \kvec \in \Zint^m.
\end{equation*}
From these estimators we then obtain an estimator $\hg$ of the unknown
covariate density function $g$, that is
\begin{equation} \label{cg_def}
\hg(\xvec) = \frac1n \sum_{j = 1}^n
 W_{c_n}\big( \xvec - X_j \big),
 \quad \xvec \in \CC,
\end{equation}
with smoothing weights
\begin{equation} \label{weights}
W_{c_n}\big( \xvec - X_j \big) =
 \sum_{\kvec \in \Zint^m} \Lambda(c_n\kvec)
 \exp\Big\{ i 2\pi \kvec \cdot \big( \xvec - X_j \big) \Big\}.
\end{equation}
Here (as before) the choice of $\Lambda$ defines the form of the
smoothing weights $W_{c_n}$. The sequence $\{c_n\}_{n \geq 1}$ of
smoothing parameters is specified later.

We now propose to estimate the Fourier coefficients
$\{R(\kvec)\}_{\kvec \in \Zint^m}$ of the distorted regression
function $K\theta$ by
\begin{equation*}
\Rhat(\kvec) = \frac1n \sum_{j = 1}^n
 \frac{Y_j}{\hg(X_j)} e^{ -i 2\pi \kvec \cdot X_j },
 \quad \kvec \in \Zint^m,
\end{equation*}
where the density estimator $\hg$ is specified in \eqref{cg_def}. This
gives for the nonparametric Fourier series estimator in \eqref{khat}
the representation
\begin{equation} \label{htheta_random_x}
\hKtheta(\xvec) = \sum_{\kvec \in \Zint^m} \Lambda(c_n\kvec) \Rhat(\kvec)
 e^{i 2\pi \kvec \cdot \xvec }
= \frac1n \sum_{j = 1}^n
 \frac{Y_j}{\hg(X_j)} W_{c_n}\big( \xvec - X_j \big),
\quad \xvec \in \CC,
\end{equation}
where the smoothing weights $W_{c_n}$ are defined in
\eqref{weights}.

The results of Lemma \ref{lem_Rhat} in Section \ref{appendix} show
that the consistency of the estimated Fourier coefficients
$\{\Rhat(\kvec)\}_{\kvec \in \Zint^m}$ is heavily dependent on the
consistency of the covariate density estimator $\hg$. This fact
motivates our choice of smoothing parameters as
\begin{equation} \label{bw}
c_n = O\big(n^{-1/(2s_0 + 2b + 3m)} \log^{1/(2s_0 + 2b + 3m)}(n)\big)
\end{equation}
and requiring that the covariate density function $g$ has a smoothness
index $s = s_0 + b + m$ in Assumption \ref{assump_g}, where $s_0$
is the smoothness index of the function class $\MM(s_0)$ to which
$\theta$ belongs, $b$ is the degree of ill-posedness of the inverse
problem and $m$ is the dimension of the covariates. Our first result
result establishes the uniform consistency of the estimator $\hKtheta$
in \eqref{khat} and a further technical metric space inclusion
property that is useful for working with residual-based empirical
processes.

\begin{theorem} \label{thm_htheta_random}
Let $\theta \in \MM(s_0)$ for some $s_0 > 0$ and let Assumption
\ref{assumpPsi} hold for some degree of
ill-posedness $b \geq 0$. Let Assumption \ref{assumpLambda} hold for a
Fourier smoothing kernel $\Lambda$ that satisfies
$\int_{\RR^m}\,\|\uvec\|^{\max\{s_0+b,1\}}|\Lambda(\uvec)|\,d\uvec <
\infty$. Further let Assumption \ref{assump_g} hold for $s =
s_0 + b + m$ and assume that the errors $\ve_1,\ldots,\ve_n$ have a
finite absolute moment of order $\kappa > 2$. Choose the smoothing
parameter $c_n$ as in \eqref{bw}. Then
\begin{equation*}
\sup_{\xvec \in \CC} \Big| \hKtheta(\xvec) - K\theta(\xvec) \Big|
 = O\big( n^{-(s_0 + b)/(2s_0 + 2b + 3m)}
 \log^{(s_0 + b)/(2s_0 + 2b + 3m)}(n) \big),
 \quad\text{a.s.,}
\end{equation*}
and
\begin{equation*}
\hKtheta - K\theta \in \MM_1(s_0 + b),
 \quad\text{a.s.,}
\end{equation*}
where $\MM_1(s_0 + b)$ is the unit ball of the metric space
$(\MM(s_0 + b),\,\|\cdot\|_{\infty})$.
\end{theorem}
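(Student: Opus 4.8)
The plan is to insert the deterministic smoothed target $(K\theta)_n(\xvec) = \sum_{\kvec\in\Zint^m}\Lambda(c_n\kvec)R(\kvec)e^{i2\pi\kvec\cdot\xvec}$ and split
\begin{equation*}
\hKtheta(\xvec) - K\theta(\xvec) = \big[\hKtheta(\xvec) - (K\theta)_n(\xvec)\big] + \big[(K\theta)_n(\xvec) - K\theta(\xvec)\big]
\end{equation*}
into a stochastic term and a deterministic bias term. For the bias I would exploit the spectral cut-off built into $\Lambda$: by Assumption \ref{assumpLambda}, $\Lambda(c_n\kvec)=1$ whenever $\|c_n\kvec\|\le 1$, so only the frequencies $\|\kvec\|>c_n^{-1}$ survive in $(K\theta)_n - K\theta$. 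Using $|1-\Lambda(c_n\kvec)|\le 2$, the fact that $K\theta\in\MM(s_0+b)$ (which follows from $\theta\in\MM(s_0)$ and Assumption \ref{assumpPsi} through \eqref{fourierrelation}), and $\|\kvec\|^{-(s_0+b)}<c_n^{s_0+b}$ on that range, I would bound
\begin{equation*}
\sup_{\xvec\in\CC}\big|(K\theta)_n(\xvec)-K\theta(\xvec)\big|\le 2\sum_{\|\kvec\|>c_n^{-1}}|R(\kvec)|\le 2c_n^{s_0+b}\sum_{\kvec\in\Zint^m}\|\kvec\|^{s_0+b}|R(\kvec)| = O\big(c_n^{s_0+b}\big),
\end{equation*}
and substituting the bandwidth \eqref{bw} already produces the stated rate. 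The whole argument therefore reduces to showing the stochastic term is of the same or smaller order.

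For the stochastic term I would pass to the kernel form on the right of \eqref{htheta_random_x} and write
\begin{equation*}
\hKtheta(\xvec)-(K\theta)_n(\xvec) = \frac1n\sum_{j=1}^n\frac{Y_j}{\hg(X_j)}W_{c_n}(\xvec-X_j) - \int_\CC K\theta(\uvec)W_{c_n}(\xvec-\uvec)\,d\uvec,
\end{equation*}
using that $(K\theta)_n(\xvec)=\int_\CC K\theta(\uvec)W_{c_n}(\xvec-\uvec)\,d\uvec$, and then substitute $Y_j=K\theta(X_j)+\ve_j$. The random denominator $\hg$ is the genuine obstacle, so I would linearize $\hg^{-1}=g^{-1}-g^{-2}(\hg-g)+\cdots$; since $\inf_\CC g>0$ by Assumption \ref{assump_g}, $\inf_\CC\hg$ stays bounded away from $0$ almost surely for large $n$, and the correction is then controlled by $\sup_\CC|\hg-g|$. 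The two leading pieces are a mean-zero error average $n^{-1}\sum_{j=1}^n g(X_j)^{-1}\ve_j W_{c_n}(\xvec-X_j)$ and a design average $n^{-1}\sum_{j=1}^n g(X_j)^{-1}K\theta(X_j)W_{c_n}(\xvec-X_j)$, whose expectation is exactly the deterministic target $\int_\CC K\theta(\uvec)W_{c_n}(\xvec-\uvec)\,d\uvec$; both I would control uniformly in $\xvec$ by a maximal/exponential inequality, using $\|W_{c_n}\|_\infty=O(c_n^{-m})$, the pointwise variance bound $O((nc_n^m)^{-1})$, and the moment hypothesis $\kappa>2$ on the errors. This uniform control is what Lemma \ref{lem_Rhat} provides at the level of the Fourier coefficients $\Rhat(\kvec)$.

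The role of the bandwidth \eqref{bw} is to balance the density-estimation bias, which is $O(c_n^{s_0+b+m})$ because $g\in\MM(s)$ with $s=s_0+b+m$, against the uniform stochastic fluctuation $\{\log n/(nc_n^m)\}^{1/2}$; with the stated choice of $c_n$ both of these, together with the $\hg$-correction, are $o(c_n^{s_0+b})$. Hence the stochastic term is dominated by the bias and the overall almost-sure rate is $O(c_n^{s_0+b})$, which, after inserting \eqref{bw}, is the asserted bound.

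For the inclusion $\hKtheta-K\theta\in\MM_1(s_0+b)$ I would treat membership and the norm constraint separately. As $\MM(s_0+b)$ is a linear space containing $K\theta$, it is enough to verify $\hKtheta\in\MM(s_0+b)$; its Fourier coefficients are $\Lambda(c_n\kvec)\Rhat(\kvec)$, and since for a fixed sample $|\Rhat(\kvec)|\le n^{-1}\sum_{j=1}^n\hg(X_j)^{-1}|Y_j|<\infty$ uniformly in $\kvec$ almost surely, the defining series of \eqref{Mdef} is bounded by a multiple of $\sum_{\kvec\in\Zint^m}\|\kvec\|^{s_0+b}|\Lambda(c_n\kvec)|$. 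The strengthened hypothesis $\int_{\RR^m}\|\uvec\|^{\max\{s_0+b,1\}}|\Lambda(\uvec)|\,d\uvec<\infty$ makes this a convergent Riemann-type sum (of order $c_n^{-(s_0+b+m)}$), and because a bounded summable nonnegative sequence is automatically square-summable the accompanying $\WW_{\text{per}}^{s_0+b,2}$ condition follows as well, so $\hKtheta-K\theta\in\MM(s_0+b)$ almost surely. The unit-ball constraint is then immediate from the first part, since $\sup_\CC|\hKtheta-K\theta|\to 0$ almost surely forces $\|\hKtheta-K\theta\|_\infty\le 1$ for all large $n$. I expect the uniform stochastic control in the presence of the random denominator $\hg$ to be the only genuinely delicate step, the remainder being a careful comparison of the various rates against the bandwidth \eqref{bw}.
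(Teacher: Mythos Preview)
Your proposal is correct and follows the same skeleton as the paper: split $\hKtheta - K\theta$ into the deterministic bias $(K\theta)_n - K\theta = O(c_n^{s_0+b})$ and the stochastic piece $\hKtheta - (K\theta)_n$, control the latter through Lemma~\ref{lem_Rhat} (which already absorbs the $\hg$-linearization via Lemma~\ref{lem_hg}), and deduce the $\MM(s_0+b)$ membership of $\hKtheta$ from its Fourier coefficients before invoking the sup-norm bound for the unit-ball claim. The one substantive difference is how the stochastic term is made uniform in $\xvec$: you work in the kernel form and propose a fresh covering-plus-exponential-inequality argument on the kernel sums, whereas the paper stays entirely in the Fourier domain and bounds
\[
\sup_{\xvec}\Big|\sum_{\kvec}\Lambda(c_n\kvec)\{\Rhat(\kvec)-R(\kvec)\}e^{i2\pi\kvec\cdot\xvec}\Big|\le\max_{\kvec}\big|\Rhat(\kvec)-R(\kvec)\big|\cdot\sum_{\kvec}|\Lambda(c_n\kvec)|=O(c_n^{s_0+b+m})\cdot O(c_n^{-m}),
\]
reading the first factor directly from Lemma~\ref{lem_Rhat}. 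The paper's route is shorter because uniformity over $\xvec$ comes for free once you have uniformity over $\kvec$; your kernel route would need the covering argument together with truncation (the errors carry only a $\kappa>2$ moment, so Bernstein does not apply raw), though it would in fact yield the slightly sharper $o(c_n^{s_0+b})$ for the stochastic piece---immaterial since the bias already sits at $O(c_n^{s_0+b})$, so your claim that it is ``dominated by the bias'' is harmless even if the paper's cruder Fourier bound makes them the same order. Likewise for the inclusion: the paper splits $\Lambda(c_n\kvec)\Rhat(\kvec)$ as $R(\kvec)$ plus a remainder and re-uses Lemma~\ref{lem_Rhat} to keep the defining series $O(1)$ uniformly in $n$, while your uniform bound on $|\Rhat(\kvec)|$ gives a finite but $n$-divergent series; both establish membership, and the unit-ball constraint then follows identically from the first assertion.
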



\section{Goodness-of-fit testing the error distribution}
\label{norm_test}

In this section we consider the problem of goodness-of-fit testing of
a location-scale distribution of the errors in the indirect regression
model \eqref{modeleq} with convolution operator
\eqref{modeleq1}. Here the location parameter is the mean of the
errors and equal to zero, but the scale parameter is unknown. The null
hypothesis is given by
\begin{equation} \label{hyp_null}
H_0 \,:\, \exists \sigma > 0 \,:\, f(t) =
 \frac1{\sigma}f_*\bigg(\frac{t}{\sigma}\bigg),
 \quad t \in \RR,
\end{equation}
where $f_*$ is a specified density function of the standardized error
distribution and $\sigma$ is the unknown scale parameter. To simplify
notation we write $f_\sigma$ for the density function of the {\em
standardized errors} $Z_j = \ve_j / \sigma$ ($j=1,\ldots,n$) and
$F_\sigma(t) = \int_{-\infty}^t\,f_\sigma(y)\,dy $ ($t \in \RR$) for
the corresponding distribution function. With this notation the null
hypothesis in \eqref{hyp_null} becomes $H_0\,:\,f_\sigma = f_*$ for
some $\sigma > 0$. Equivalently, we can write $H_0\,:\,F_\sigma = F_*$
for some $\sigma >0$ by writing $F_*(t) =
\int_{-\infty}^t\,f_*(y)\,dy$ ($t \in \RR$) for the error distribution
function specified by the null hypothesis.

Following  M\"uller et al.\ (2012), who consider a similar problem in
the direct case, we propose to use the standardized residuals
\begin{equation*}
\hZ_j = \frac{\hve_j}{\shat},  \quad j = 1,\ldots,n,
\end{equation*}
to form a suitable test statistic, where $\hve_j = Y_j -
\hKtheta(X_j)$ ($j = 1,\ldots,n$) are the residuals in the indirect
regression model \eqref{modeleq} obtained for the estimate
\eqref{htheta_random_x} and
\begin{equation*}
\shat = \bigg\{ \frac1n \sum_{j = 1}^n \hve_j^2 \bigg\}^{1/2}
\end{equation*}
is a consistent estimator of the scale parameter $\sigma$. A
nonparametric estimator of $F_*$ is given by the empirical
distribution function of these standardized residuals,
\begin{equation*}
\hF(t) = \frac1n \sum_{j = 1}^n \1\big[ \hZ_j \leq t \big],
 \quad t \in \RR.
\end{equation*}

The null hypothesis $H_0$ is then rejected if a given metric between
the estimated standardized distribution function $\hF$ and
$F_*$ is large enough. A popular metric in the literature is the
supremum metric, and this leads to the Kolmogorov-Smirnov test
statistic:
\begin{equation*}
\sup_{t \in \RR} \Big| \hF (t) - F_*(t) \Big|.
\end{equation*}
Critical values for the Kolmogorov-Smirnov test statistic are then
determined from asymptotic theory, but these can be difficult to work
with in practice because they depend on $F_*$. To avoid this problem,
we will work with a different test statistic.

Our proposed test statistic will crucially depend on the estimator
$\hF$ satisfying an asymptotic expansion, which is given in the
following result.
\begin{theorem} \label{thm_hF_aslin_H0f}
Let the assumptions of Theorem \ref{thm_htheta_random} hold, with $s_0
+ b > 3m/2$ and assume that the Fourier smoothing kernel $\Lambda$ is
radially symmetric. Let $F_*$ have a finite absolute moment of
order $4$ or larger and  a bounded Lebesgue density
$f_*$ that is (uniformly) H\"older continuous with exponent $3m/(2s_0
+ 2b) < \gamma \leq 1$. Finally, the function $t \mapsto tf_*(t)$ is
assumed to be uniformly continuous and bounded. Then under the null
hypothesis \eqref{hyp_null}
\begin{equation*}
\hF(t) - F_*(t) = \frac1n \sum_{j = 1}^n \bigg\{
 \1[ Z_j \leq t ] - F_*(t)
 + f_*(t) \bigg( Z_j + t\frac{Z_j^2 - 1}{2} \bigg) \bigg\}
 + D_n(t), \quad t \in \RR,
\end{equation*}
with $\sup_{t \in \RR}|D_n(t)| = \opn$.
\end{theorem}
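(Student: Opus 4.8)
The plan is to derive the asymptotic expansion of $\hF(t) - F_*(t)$ by carefully tracking how two sources of estimation error—the regression estimate $\hKtheta$ entering through the residuals $\hve_j$, and the scale estimate $\shat$—perturb the empirical distribution function of the idealized standardized errors $Z_j = \ve_j/\sigma$. Under $H_0$ the standardized residuals satisfy $\hZ_j = \hve_j/\shat$ where $\hve_j = \ve_j + (K\theta(X_j) - \hKtheta(X_j))$. Writing $\delta_j = \hKtheta(X_j) - K\theta(X_j)$ and $\shat = \sigma(1 + \eta_n)$ for a small scalar perturbation $\eta_n$, I would expand

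\begin{equation*}
\hZ_j = \frac{\ve_j - \delta_j}{\sigma(1 + \eta_n)}
 = Z_j - \frac{\delta_j}{\sigma} - Z_j\eta_n + (\text{higher order}),
\end{equation*}

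so that the indicator $\1[\hZ_j \leq t]$ is being evaluated at a point shifted from $Z_j$ by an amount of order $\delta_j/\sigma + Z_j\eta_n$. The strategy is then a first-order Taylor expansion of the empirical process in these two shifts, which will produce the two correction terms $f_*(t)Z_j$ (from the scale) and $f_*(t)\,t(Z_j^2-1)/2$ (from the interaction of scale error with its own expansion), with the bias-type terms from $\delta_j$ vanishing asymptotically.

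\medskip

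\textbf{Key steps.} First I would establish the stochastic order of the two perturbations. By Theorem \ref{thm_htheta_random} the regression error $\sup_{\xvec}|\delta_j|$ is $o(n^{-1/2})$ under the condition $s_0 + b > 3m/2$ (this is exactly why that hypothesis is imposed, since the rate exponent $(s_0+b)/(2s_0+2b+3m)$ exceeds $1/2$ precisely then), and the inclusion $\hKtheta - K\theta \in \MM_1(s_0+b)$ provides the smoothness control needed to handle the empirical process indexed by these estimated shifts. For the scale estimator I would expand $\shat^2 = n^{-1}\sum \hve_j^2$ and obtain an asymptotically linear representation $\eta_n = (\shat - \sigma)/\sigma = \tfrac{1}{2n}\sum_{j}(Z_j^2 - 1) + \opn$, using $E[Z_j^2] = 1$. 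Second, I would decompose $\hF(t) - F_*(t)$ as the sum of the centered empirical process $n^{-1}\sum(\1[Z_j \leq t] - F_*(t))$ plus the perturbation terms, and apply a \emph{stochastic equicontinuity} argument to the residual-based empirical process: the difference between $n^{-1}\sum\1[\hZ_j \leq t]$ and $n^{-1}\sum\1[Z_j \leq t]$ must be replaced by its conditional-expectation (drift) approximation $f_*(t)\cdot(\text{shift})$, uniformly in $t$. Third, I would substitute the asymptotically linear forms of $\delta_j$ and $\eta_n$, collect the terms proportional to $Z_j$ and to $(Z_j^2-1)$, verify that the $\delta_j$ contribution is uniformly $\opn$, and absorb all remainders into $D_n(t)$.

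\medskip

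\textbf{Main obstacle.} The hard part will be the uniform (in $t \in \RR$) control of the residual empirical process—establishing the stochastic equicontinuity that justifies linearizing $n^{-1}\sum\1[\hZ_j \leq t]$ around $n^{-1}\sum\1[Z_j \leq t]$ with drift $f_*(t)$ times the shift. This is delicate because the shift $\delta_j/\sigma + Z_j\eta_n$ is random and depends on the data, and one cannot simply differentiate under the sum. The standard remedy is to view $\hKtheta - K\theta$ as ranging over the fixed Donsker-type class $\MM_1(s_0+b)$ (provided by Theorem \ref{thm_htheta_random}), and to prove an empirical-process maximal inequality uniformly over this class together with the scalar perturbation. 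The H\"older continuity of $f_*$ with exponent $\gamma > 3m/(2s_0+2b)$ is precisely the regularity needed to bound the Taylor remainder of $F_*$ at the shifted arguments, and the uniform continuity and boundedness of $t\mapsto tf_*(t)$ is what controls the scale-induced term $Z_j\eta_n$ uniformly in $t$, including in the tails. The moment condition on $F_*$ of order $4$ ensures $\eta_n = \Opn$ and that the $(Z_j^2-1)/2$ term has finite variance. I would carry out this equicontinuity step using a chaining or bracketing argument over the product index set, treating the bias and variance contributions separately and invoking the rate \eqref{bw} to guarantee all neglected cross terms are $\opn$.
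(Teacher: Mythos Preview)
Your overall architecture (equicontinuity of the residual empirical process, then linearization in the two shifts, then substitution of asymptotically linear representations) matches the paper's, but there is a genuine gap in the identification of the leading terms that would make the argument fail as written.

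The claim that $\sup_{\xvec}|\hKtheta(\xvec)-K\theta(\xvec)| = o(n^{-1/2})$ under $s_0+b>3m/2$ is arithmetically wrong: the rate exponent $(s_0+b)/(2s_0+2b+3m)$ is strictly less than $1/2$ for every $m\geq 1$ (equality would require $0>3m$). The hypothesis $s_0+b>3m/2$ only forces this exponent above $1/4$, which is what the paper uses to bound the \emph{squared} sup-norm (in the $\shat^2$ expansion) and, together with the H\"older exponent $\gamma>3m/(2s_0+2b)$, the $(1+\gamma)$-power of the sup-norm (in the Taylor remainder). Consequently the average $n^{-1}\sum_j\delta_j/\sigma$ is \emph{not} $o_P(n^{-1/2})$ from the sup-norm alone, and your plan to ``verify that the $\delta_j$ contribution is uniformly $o_P(n^{-1/2})$'' cannot work.

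In fact the $\delta_j$ contribution does not vanish; it is the source of the $f_*(t)Z_j$ term. Writing the drift as $f_*(t)\bigl[t(\shat-\sigma)/\sigma + n^{-1}\sum_j\delta_j/\sigma\bigr]$, the first bracket gives $t f_*(t)\cdot\tfrac{1}{2n}\sum_j(Z_j^2-1)$ via the expansion of $\shat$, while the second bracket gives $f_*(t)\cdot n^{-1}\sum_j Z_j$. The key identity you are missing is that
\[
\frac{1}{n}\sum_{j=1}^n\bigl\{\hKtheta(X_j)-K\theta(X_j)\bigr\} \;=\; \frac{1}{n}\sum_{j=1}^n \ve_j
\]
\emph{exactly}, not merely up to $o_P(n^{-1/2})$. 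This is an algebraic consequence of the radial symmetry of $\Lambda$ (it makes the smoothing weights $W_{c_n}$ symmetric, so $n^{-1}\sum_k W_{c_n}(X_k-X_j)=\hg(X_j)$ and the ratio in $\hKtheta$ telescopes), and it is the reason that assumption appears in the theorem. Your attribution of $f_*(t)Z_j$ to the scale and of $tf_*(t)(Z_j^2-1)/2$ to an ``interaction'' is reversed; once you insert the exact identity above, the bookkeeping becomes the one in the stated expansion.
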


\begin{remark} \label{rem_simulated_critvals}
A direct consequence of Theorem \ref{thm_hF_aslin_H0f} is that,
under the null hypothesis \eqref{hyp_null}, the stochastic process $\{
\sqrt{n}( \hF(t) - F_*(t) ) \}_{t \in \RR}$ weakly converges in the space
$\ell^\infty([-\infty,\,\infty])$ to a Gaussian process, which is also
the weak limit of the stochastic process
\begin{equation*}
\bigg\{ \frac{1}{\sqrt{n}}
 \sum_{j = 1}^n \bigg\{
 \1[ Z_j \leq t ] - F_*(t)
 + f_*(t)\bigg(  Z_j + t\frac{Z_j^2 - 1}{2} \bigg)
 \bigg\} \bigg\}_{t \in \RR}.
\end{equation*}
This limit distribution can be easily simulated. However, it is
clearly not distribution free because it depends on $F_*$ and $f_*$
specified in the null hypothesis.
\end{remark}

In order to obtain a test statistic whose critical values are
independent from the distribution specified in the null hypothesis, we
use a particular projection of the residual-based empirical process by
viewing this quantity as an (approximate) semimartingale with respect
to its natural filtration. The projection is given by the Doob-Meyer
decomposition of this semimartingale (see page 1012 of Khmaladze and
Koul, 2004). For this purpose we will assume that $F_*$ has finite
Fisher information for location and scale, i.e.\
\begin{equation} \label{F_Fisher}
\int_{-\infty}^{\infty} \big( 1 + t^2 \big)
 \bigg(\frac{f_*'(t)}{f_*(t)} \bigg)^2\, F_*(dt)
 < \infty,
\end{equation}
writing $f_*'$ for the derivative of the Lebesgue density $f_*$.

The Khmaladze transformation produces a standard limiting
distribution: a standard Brownian motion on $[0,\,1]$, and as a
consequence we can construct test statistics which are asymptotically
distribution free, i.e.\ the corresponding critical values do not
depend on $F_*$ specified by the null hypothesis.

To be precise, note that $F_*$ characteristically has mean
zero and variance equal to one. In order to introduce our test
statistic we define the augmented score function
\begin{equation*}
h(t) = (1, -f_*'(t)/f_*(t), -(tf_*(t))'/f_*(t))^T
\end{equation*}
and the incomplete information matrix
\begin{equation} \label{GammaMat}
\Gamma(t) = \int_t^\infty\,h(u)h(u)^T\,F_*(du),
 \quad t \in \RR.
\end{equation}
Following Khmaladze and Koul (2009) the transformed empirical process
of standardized residuals is given by
\begin{equation*}
\hat{\xi}_0(t) = n^{1/2}\bigg\{ \hF(t) - \int_{-\infty}^t\,
 h^T(y)\Gamma^{-1}(y)\int_y^\infty\,h(z)\hF(dz)\,F_*(dy)
 \bigg\},
\quad -\infty < t \leq t_0,
\end{equation*}
for some $t_0 < \infty$. We can rewrite $\hat{\xi}_0$ in a more
computationally friendly form, i.e.\
\begin{equation*}
\hat{\xi}_0(t) = n^{1/2} \bigg\{ \hF(t) - \frac1n \sum_{j = 1}^n
 \mathscr{G}_0\big(t \wedge \hZ_j \big) h\big( \hZ_j \big) \bigg\},
 \quad -\infty < t \leq t_0,
\end{equation*}
where
\begin{equation*}
\mathscr{G}_0(t) = \int_{-\infty}^t\,
 h^T(y)\Gamma^{-1}(y)\,F_*(dy),
 \quad -\infty < t \leq t_0.
\end{equation*}
Under the null hypothesis \eqref{hyp_null} $\hat{\xi}_0$ weakly
converges in the space $\ell^\infty([-\infty,\,t_0])$ to
$\mathscr{B}(F_*)$, writing $\mathscr{B}$ for the standard Brownian
motion.

In general, the incomplete information matrix $\Gamma$ does not have a
simple form, and $\Gamma(t_0)$ degenerates  as $t_0
\to \infty$. To avoid this degeneracy issue we proceed as in Stute et
al.\ (1998), who recommend using the $99\%$ quantile from the
empirical distribution function $\hF$ for $t_0$, i.e.\ $t_0 =
\hF^{-1}(0.99)$ writing $\hF^{-1}$ for the sample quantile function
associated with $\hF$. We propose to base a goodness-of-fit test for
the hypothesis \eqref{hyp_null} on the supremum metric between
$\hat{\xi}_0 / (\hF(t_0))^{1/2}$ and the constant $0$:
\begin{equation} \label{test_stat_H0f}
T_0 = \sup_{-\infty < t \leq t_0} \bigg|
 \frac{\hat{\xi}_0(t)}{(\hF(t_0))^{1/2}} \bigg|
 = \sup_{-\infty < t \leq t_0} \bigg|
 \frac{\hat{\xi}_0(t)}{0.995} \bigg|.
\end{equation}
The test statistic $T_0$ has an asymptotic distribution given by
$\sup_{0 \leq s \leq 1}|\mathscr{B}(s)|$ under the null hypothesis
\eqref{hyp_null}.

Our proposed goodness-of-fit test for the null hypothesis
\eqref{hyp_null} is then defined by
\begin{equation} \label{test}
\text{Reject $H_0$ when $T_0 > q_\alpha$,}
\end{equation}
where $q_\alpha$ is the upper $\alpha$-quantile of the
distribution of $\sup_{0 \leq s \leq 1} |\mathscr{B}(s)|$. The value
of $q_\alpha$ may be obtained from formula (7) on page 34 of Shorack
and Wellner (1986), i.e.
\begin{equation*}
P\bigg( \sup_{0 \leq s \leq 1} \big| \mathscr{B}(s) \big| > q_\alpha \bigg)
 = 1 - \frac{4}{\pi} \sum_{k = 0}^\infty \frac{(-1)^k}{2k + 1}
 \exp\bigg(-\frac{(2k + 1)^2\pi^2}{8q_\alpha^2}\bigg),
\quad \alpha < 1.
\end{equation*}
For a $5\%$-level test, $\alpha = 0.05$ and $q_{0.05}$ is
approximately $2.2414$.


\section{Finite sample properties}
\label{simulations}

We conclude the article with a numerical study of the previous
results with two examples and an application of the proposed
test. Throughout this section we consider a goodness-of-fit
test for normally distributed errors in the indirect regression model
\eqref{modeleq}, i.e.\
\begin{equation*}
H_0\,:\,F_\sigma = \Phi
 \quad \text{for some $\sigma > 0$}.
\end{equation*}
Note that in this case a straightforward calculation
shows that the augmented score function $h$ and the incomplete
information matrix $\Gamma$ from \eqref{GammaMat} become particularly
simple, that is $h(t) = (1,\,t,\,t^2-1)^T$ and
\begin{equation*}
\Gamma(t) = \begin{pmatrix}
1 - \Phi(t) & \phi(t) & t\phi(t) \\
\phi(t) & 1 - \Phi(t) + t\phi(t) & (t^2 + 1)\phi(t) \\
t\phi(t) & (t^2 + 1)\phi(t) & 2(1 - \Phi(t)) + (t^3 + t)\phi(t)
\end{pmatrix},
\quad t \in \RR,
\end{equation*}
writing $\Phi$ and $\phi$ for the respective distribution and density
functions of the standard normal distribution.

\subsection{Simulation study}
\label{sim_study}
In the first example we generate independent bivariate covariates $X_j
= (X_{1,j},\,X_{2,j})^T$ with independent and identically distributed
components $X_{1,j}$ and $X_{2,j}$ ($j=1,\ldots,n$) as follows. The
common distribution of $X_{1,j}$ and $X_{2,j}$ is characterized by the
density function $g(x_1,\,x_2) = g_1(x_1)g_1(x_2)$ ($(x_1,\,x_2)^T \in
[0,\,1]^2$), which is depicted in the left panel of Figure
\ref{persp_plots}, where
\begin{equation*}
g_1(x) = 1 - \frac{\sqrt{2}}{4} \cos( 2 \pi x )
 - \frac{\sqrt{2}}{8} \cos( 4 \pi x ),
\quad x \in [0,\,1].
\end{equation*}
One can easily verify that $g$ is a probability density function and
satisfies the requirements of Assumption \ref{assump_g} for any $s >
0$. The random sample of covariates $X_1,\ldots,X_n$ is then generated
from the distribution characterized by the non-trivial density
function $g$ using a standard probability integral transform
approach. In the second example we use independently, uniformly
distributed covariates in the unit square $[0,\,1]^2$.

\begin{figure}
\centering
\begin{minipage}{0.3\textwidth}
\includegraphics[width=0.9\textwidth]{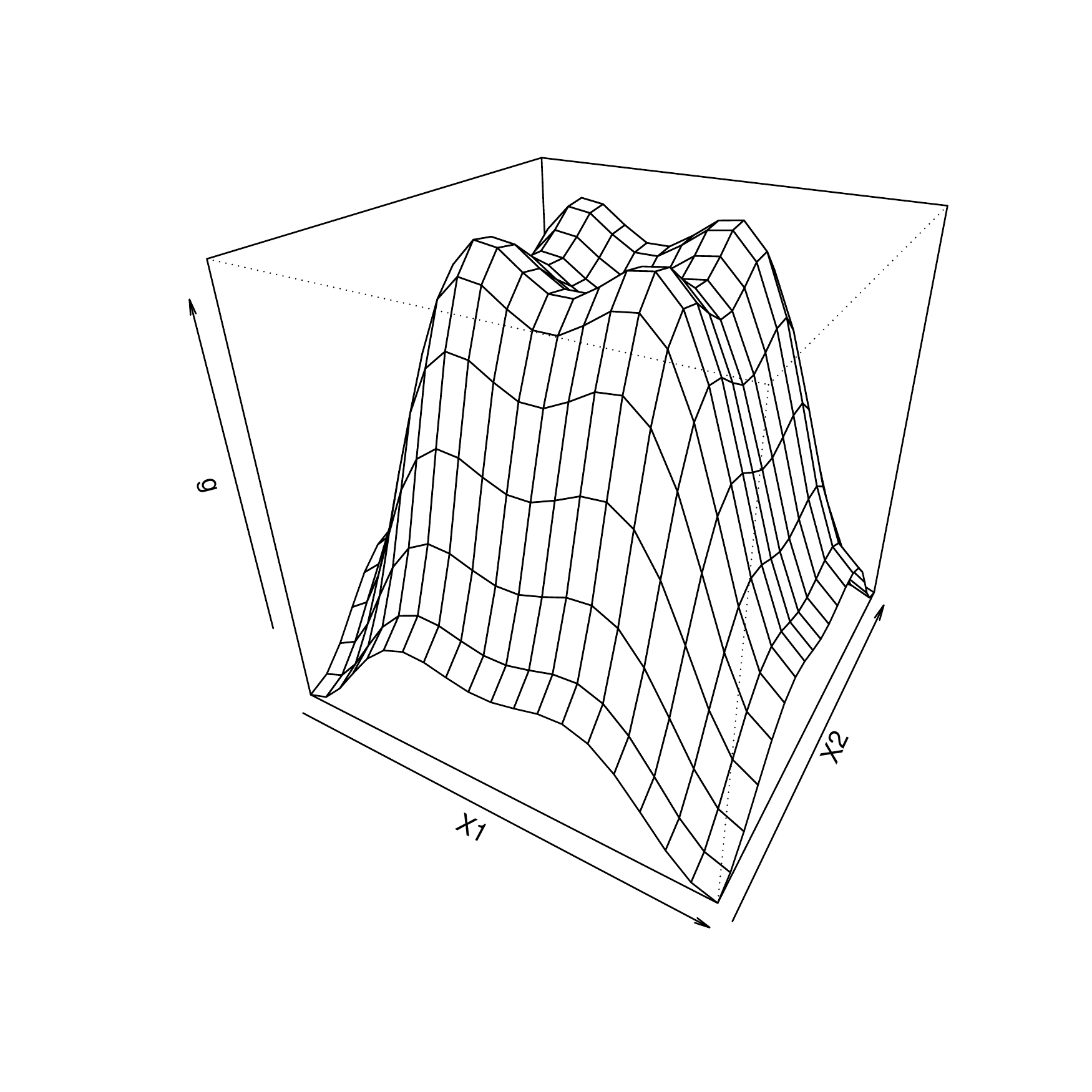}
\begin{center}{\it (a)}\end{center}
\end{minipage}
\begin{minipage}{0.3\textwidth}
\includegraphics[width=0.9\textwidth]{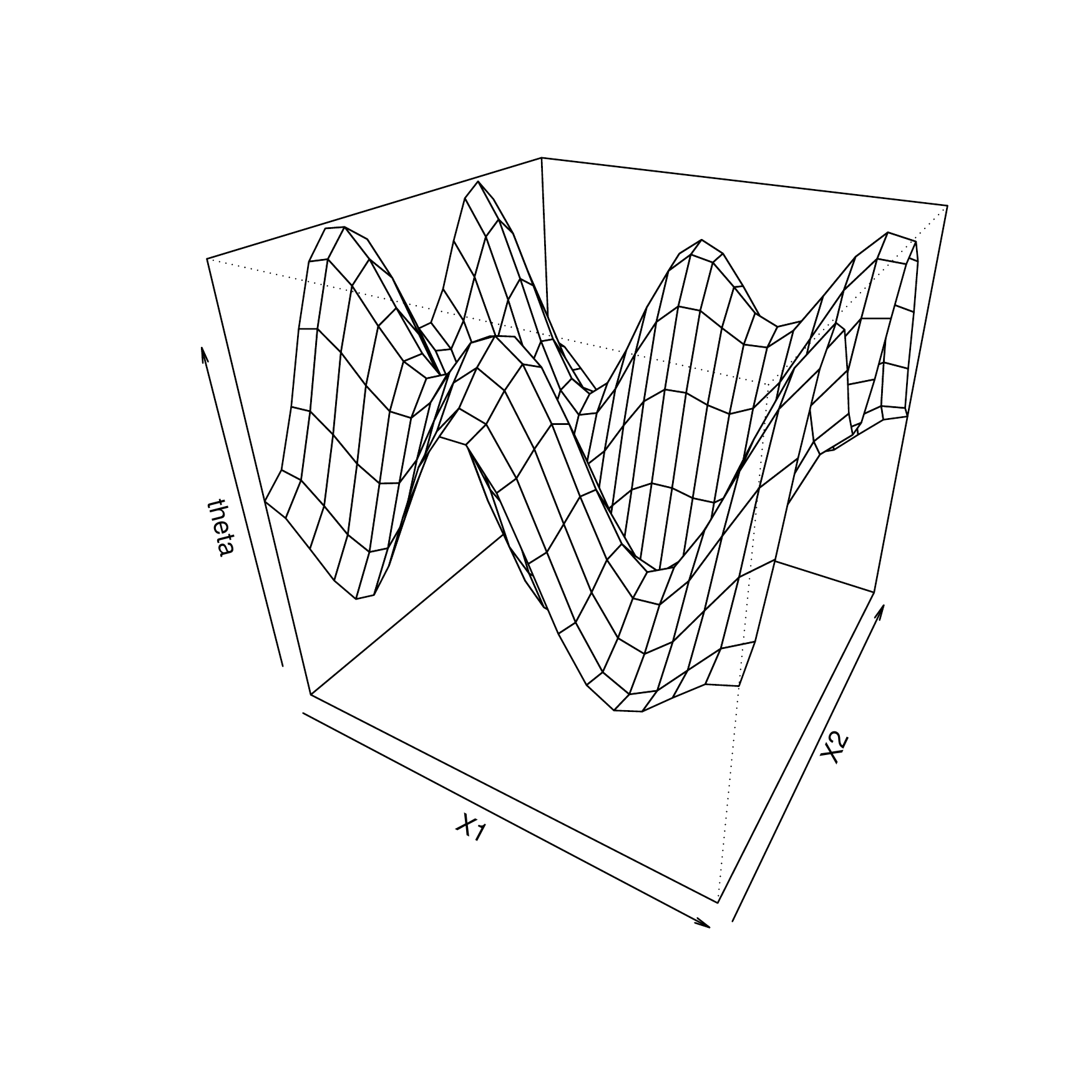}
\begin{center}{\it (b)}\end{center}
\end{minipage}
\begin{minipage}{0.3\textwidth}
\includegraphics[width=0.9\textwidth]{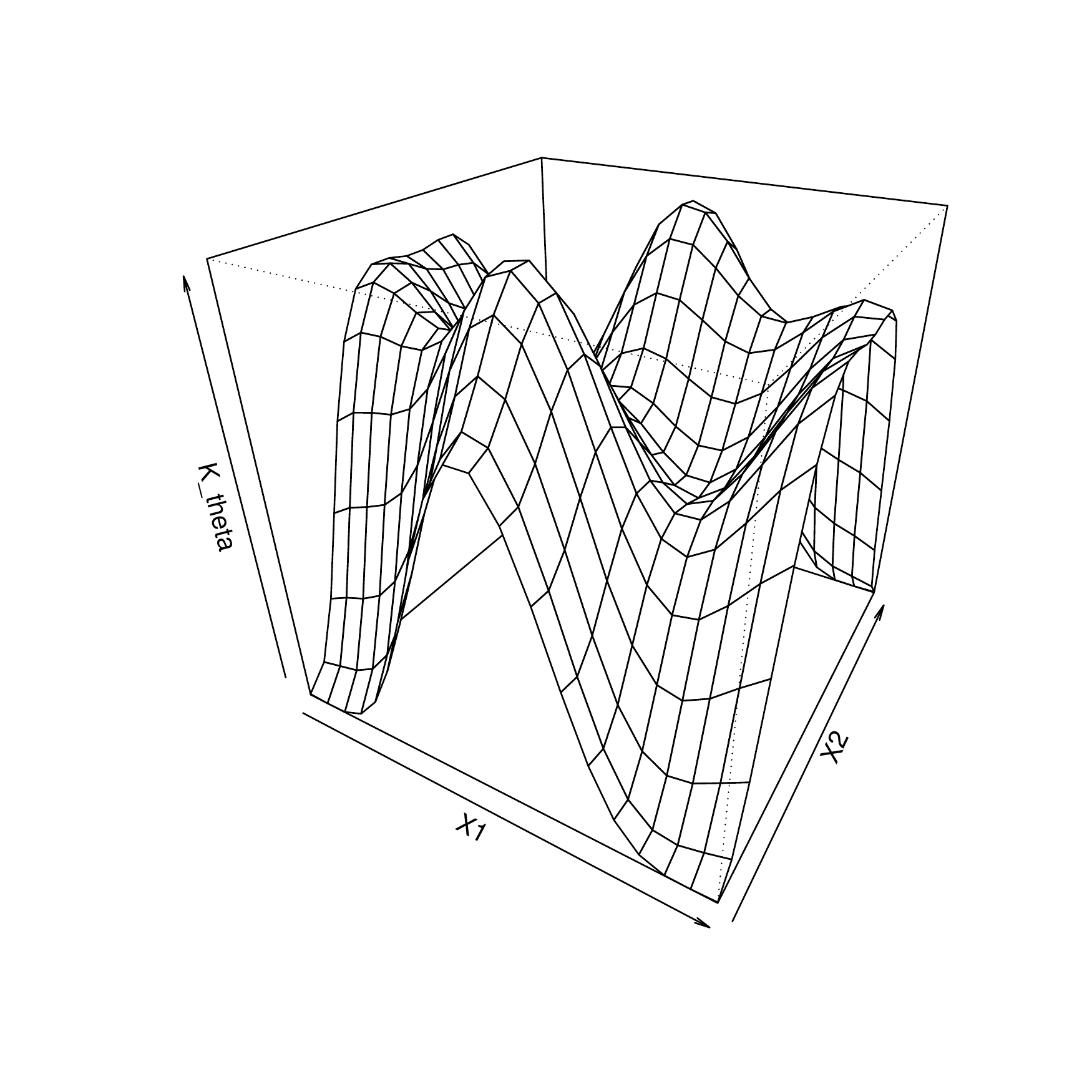}
\begin{center}{\it (c)}\end{center}
\end{minipage}
\caption{{\it Perspective plots of (a) the density function $g$, (b)
the indirect regression function $\theta$ and (c) the distorted
regression function $K\theta$.}}
\label{persp_plots}
\end{figure}

The distortion function $\psi$ is taken as the product of two
(normalized) Laplace density functions restricted to the interval
$[0,\,1]$, each with mean $1/2$ and scale $1/10$. For greater
transparency, the Fourier coefficients of the distortion function
$\psi$ are
\begin{equation*}
\Psi(\kvec) = \frac{\big( (-1)^{|k_1|} - \exp(-5) \big)
 \big( (-1)^{|k_2|} - \exp(-5) \big)}{(1 + 4\pi^2k_1^2/10^2)
 (1 + 4\pi^2k_2^2/10^2)(1 - \exp(-5))^2},
\quad \kvec = (k_1,\,k_2)^T \in \Zint^2.
\end{equation*}
This choice indeed satisfies Assumption \ref{assumpPsi} with
$b = 2$. When nonparametric smoothing is performed we work with the
radially symmetric spectral cutting kernel characterized by the
Fourier coefficient function $\Lambda(c_n\kvec) = \1[ \|c_n\kvec\|
\leq 1]$, $\kvec \in \Zint^2$, with smoothing parameter $c_n$ chosen
by minimizing the leave-one-out cross-validated estimate of the mean
squared prediction error (see, for example, H\"ardle and Marron,
1985). This choice is practical, simple to implement and performed
well in our study.

The indirect regression function is given by
\begin{align*}
\theta(x_1,\,x_2) &= 5 + \cos(2\pi x_1)
 + \frac32\cos(2\pi x_2) + \frac32\cos(4\pi x_1) \\
&\quad - 2 \cos(4\pi x_2) -2\cos\big(2\pi (x_1 + x_2)\big)
 -\frac12\cos\big(2\pi (x_1 - x_2)\big)
\end{align*}
for $(x_1,\,x_2)^T \in [0,\,1]^2$. This is easily seen to belong to
$\MM(s_0)$ for any $s_0 > 0$. Following the previous discussion, the
distorted regression $K\theta$ belongs to $\MM(s_0 + 2)$ for any
$s_0 > 0$. In the middle and right panels of Figure \ref{persp_plots}
we display the indirect regression function $\theta$ and the distorted
regression function $K\theta$.

\begin{table}
\centering
\begin{tabular}{|c| c c c c |}
\hline
\diagbox{$F$}{$n$} & $100$ & $200$ & $300$ & $500$ \\
\hline
Normal        & $0.048$ & $0.098$ & $0.072$ & $0.052$ \\
Laplace       & $0.209$ & $0.488$ & $0.713$ & $0.914$ \\
Skew-normal   & $0.136$ & $0.388$ & $0.577$ & $0.828$ \\
Student's $t$ & $0.211$ & $0.401$ & $0.586$ & $0.786$ \\
\hline
\end{tabular}
\vspace*{1em}
\caption{{\it Simulated power of the goodness-of-fit test \eqref{test}
for normally distributed errors at the $5\%$ level with sample
sizes $100$, $200$, $300$ and $500$ and with covariates having
non-trivial distribution characterized by the density function
$g$. The first row corresponds to N$(0,\,(1/2)^2)$ distributed
errors. The remaining rows display the powers of the test under the
fixed alternative error distributions: Laplace with scale parameter
$\sigma = 1/2$; centered, skew-normal with scale parameter $\sigma =
1$ and skew parameter $\alpha = 3$; Student's $t$ with $\nu = 6$
degrees of freedom.}}
\label{table_power_X_g}
\end{table}

\begin{table}
\centering
\begin{tabular}{|c| c c c c |}
\hline
\diagbox{$F$}{$n$} & $100$ & $200$ & $300$ & $500$ \\
\hline
Normal        & $0.039$ & $0.033$ & $0.032$ & $0.048$ \\
Laplace       & $0.318$ & $0.679$ & $0.872$ & $0.979$ \\
Skew-normal   & $0.226$ & $0.558$ & $0.740$ & $0.943$ \\
Student's $t$ & $0.270$ & $0.469$ & $0.640$ & $0.815$ \\
\hline
\end{tabular}
\vspace*{1em}
\caption{{\it Simulated power of the goodness-of-fit test \eqref{test}
for normally distributed errors at the $5\%$ level with sample sizes
$100$, $200$, $300$ and $500$ and with covariates independently,
uniformly distributed in $[0,\,1]^2$. The first row corresponds to
N$(0,\,(1/2)^2)$ distributed errors. The remaining rows display the
powers of the test under the fixed alternative error distributions:
Laplace with scale parameter $\sigma = 1/2$; centered, skew-normal
with scale parameter $\sigma = 1$ and skew parameter $\alpha = 3$;
Student's $t$ with $\nu = 6$ degrees of freedom.}}
\label{table_power_X_unif}
\end{table}

We considered four scenarios: normally distributed errors with
standard deviation $\sigma = 1/2$; Laplace distributed errors with
scale parameter $\sigma = 1/2$; centered, skew-normal errors with
scale parameter $\sigma = 1$ and skew parameter $\alpha = 3$
(standard deviation is $0.2265$); Student's $t$ distributed errors
with $\nu = 6$ degrees of freedom (standard deviation is
$1.2247$). The first scenario allows us to check the level of the
proposed test statistic $T_0$, and the other three scenarios allow for
observing the simulated powers of the proposed test. Here we work with
a $5\%$-level test, and the quantile $q_{0.05}$ is then $2.2414$.

We perform $1000$ simulation runs of samples of sizes $100$, $200$,
$300$ and $500$. Table \ref{table_power_X_g} displays the results for
the first example (when the covariates have the non-trivial
distribution characterized by the density function $g$) and Table
\ref{table_power_X_unif} displays the results for the second example
(when the covariates are independently, uniformly distributed in the
unit square $[0,\,1]^2$). Beginning with the first example, at the
sample size $100$ the test rejected the null hypothesis in $4.8\%$ of
the cases (near the desired $5\%$) but at the sample sizes $200$ and
$300$ the test respectively rejected the null hypothesis in $9.8\%$
and in $7.2\%$ of the cases, which are both above the desired $5\%$
nominal level. However, at the sample size $500$ the test rejected the
null hypothesis in $5.2\%$ of the cases, which is (again) near the
desired nominal level of $5\%$. We expect that this behavior is due to
the data-driven smoothing parameter selection. Interestingly, in the
second example the test is slightly conservative at all of the
simulated sample sizes (e.g.\ rejecting $3.2\%$ of the cases at sample
size $300$), but with sample size $500$ the test rejected the null
hypothesis in $4.8\%$ of the cases (near the nominal level of $5\%$),
which coincides with the first example.

Turning our attention now to the power of the test, in the first
example, we can see that the test performs well for moderate and
larger sample sizes. At the sample size $100$ the test respectively
rejected the alternative error distributions Laplace, skew-normal and
Student's $t$ in only $20.9\%$, $13.6\%$ and $21.1\%$ of the cases,
but at the sample size $500$ the test respectively rejected the
alternative distributions in $91.4\%$, $82.8\%$ and $78.6\%$ of the
cases. In the second example, we can see that the power of test
dramatically improves with smaller sample sizes (rejecting the
alternative distributions in $31.8\%$, $22.6\%$ and $27\%$ of the
cases at sample size $100$) with less improvement at larger sample
sizes (rejecting the alternative distributions in $97.9\%$, $94.3\%$
and $81.5\%$ of the cases at the sample size $500$).
In conclusion it appears that the proposed test statistic $T_0$ is an
effective tool for testing the goodness-of-fit of a desired error
distribution in indirect regression models.

\subsection{An application to image reconstruction}
\label{app}
Here we illustrate an application of the previous results using the
HeLa dataset investigated in Bissantz et al.\ (2009) and more recently
by Bissantz et al.\ (2016). This data composes an image of living HeLa
cells obtained using a standard confocal laser scanning microscope and
consists of intensity measurements (numbered values $0,\ldots,255$) on
$512 \times 512$ pixels giving a total of $262144$ observations, see
Figure \ref{HeLa_original}. As noted on page $41$ of Bissantz et al.\
(2009), these image data are (approximately) Poisson distributed. We
therefore apply the Anscombe transformation $Y \mapsto 2(Y +
3/8)^{1/2}$ to obtain approximately normally distributed data, and
then apply the test \eqref{test} to check the assumption of normally
distributed errors (at the $5\%$ level) from a reconstruction of this
image using the previously studied results. We use the computing
language {\em R} with the package {\em OpenImageR}, which allows for
reading the image data and conducting our analysis.

\begin{figure}
\includegraphics[width=0.4\textwidth]{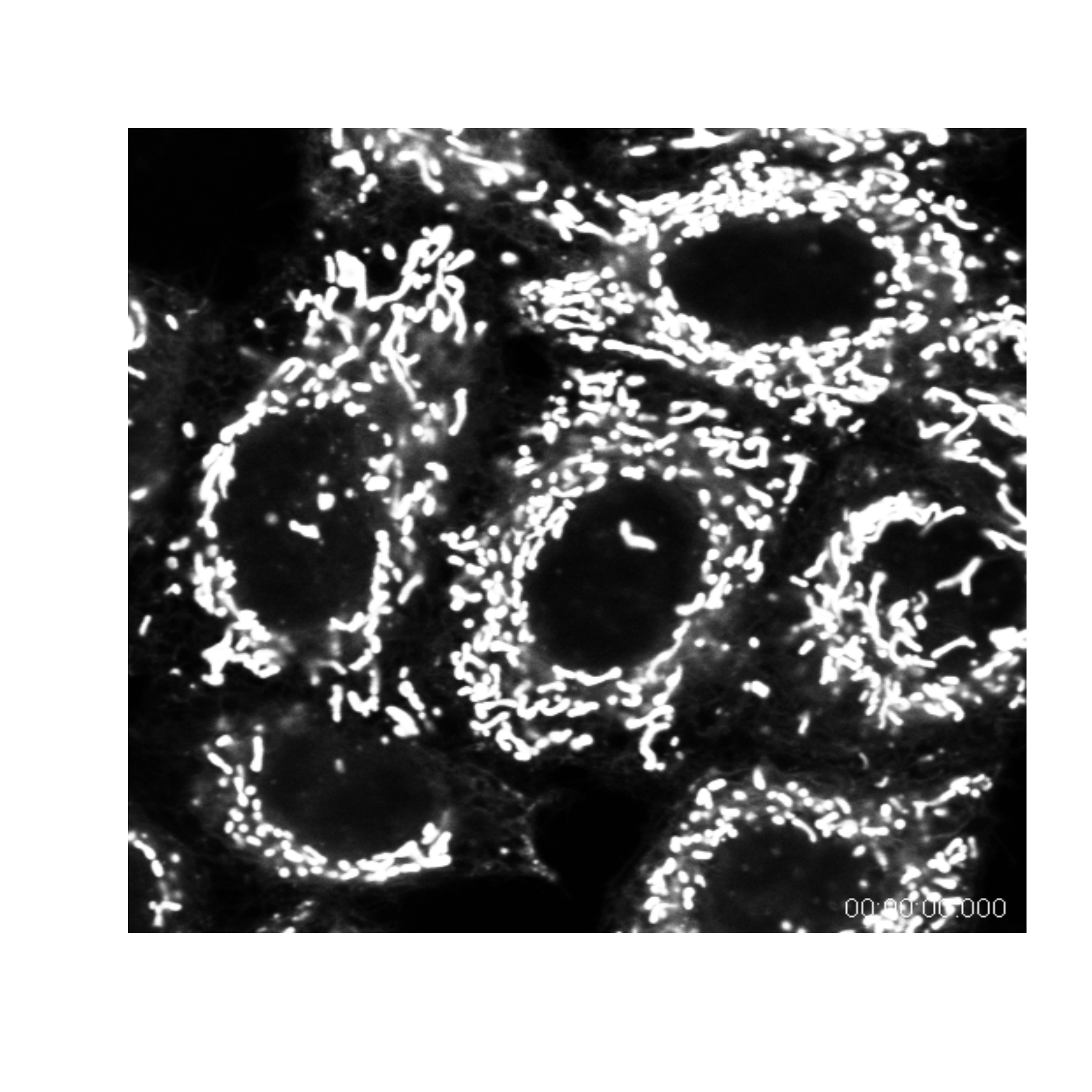}
\caption{{\it HeLa image data rendered in grayscale.}}
\label{HeLa_original}
\end{figure}

Since the total number of observations is quite large, we rather
illustrate the test for normal errors using two smaller sections of
the original HeLa image. To display the reconstructions of the smaller
images (for visual comparison with the original data) we apply the
inverse of the Anscombe transformation to the fitted values of each
regression. In both examples, the pixels are mapped to midpoints of
appropriate grids of the unit square $[0,\,1]^2$. The first image we
consider is $32 \times 32$ pixels composing $1024$ observations
and is displayed in Figure \ref{HeLa_small} alongside its reconstructed
version and a normal QQ-plot of the resulting standardized regression
residuals (see Section \ref{norm_test}). The second
image we consider is $64 \times 64$ pixels composing $4096$
observations and is displayed in Figure \ref{HeLa_medium} alongside
its reconstructed version and a normal QQ-plot of the resulting
standardized regression residuals. In both cases, as in Section
\ref{sim_study}, when nonparametric smoothing is applied the smoothing
parameter is chosen by minimizing the leave-one-out cross-validated
estimate of the mean squared prediction error.

\begin{figure}
\includegraphics[width=0.25\textwidth]{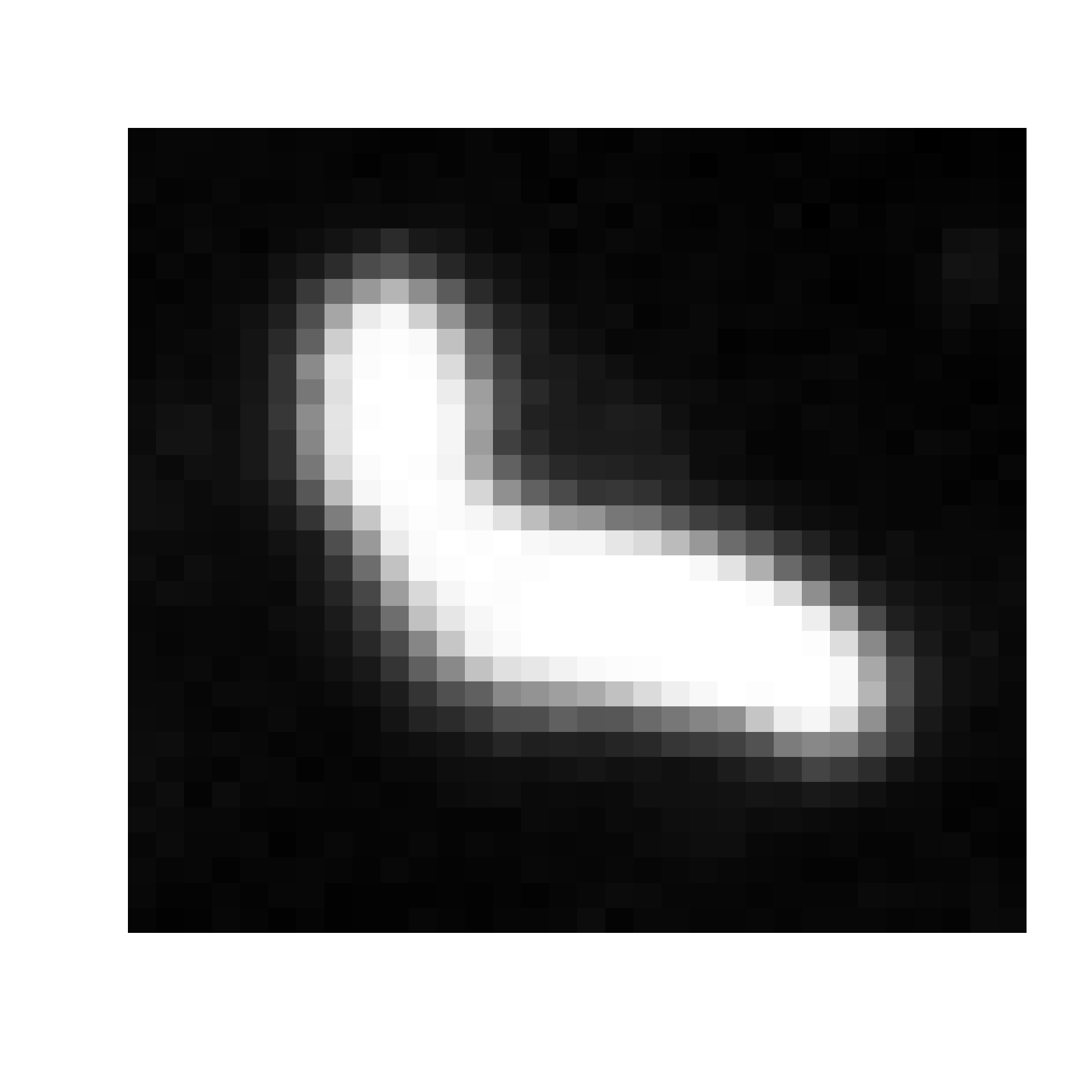}
\includegraphics[width=0.25\textwidth]{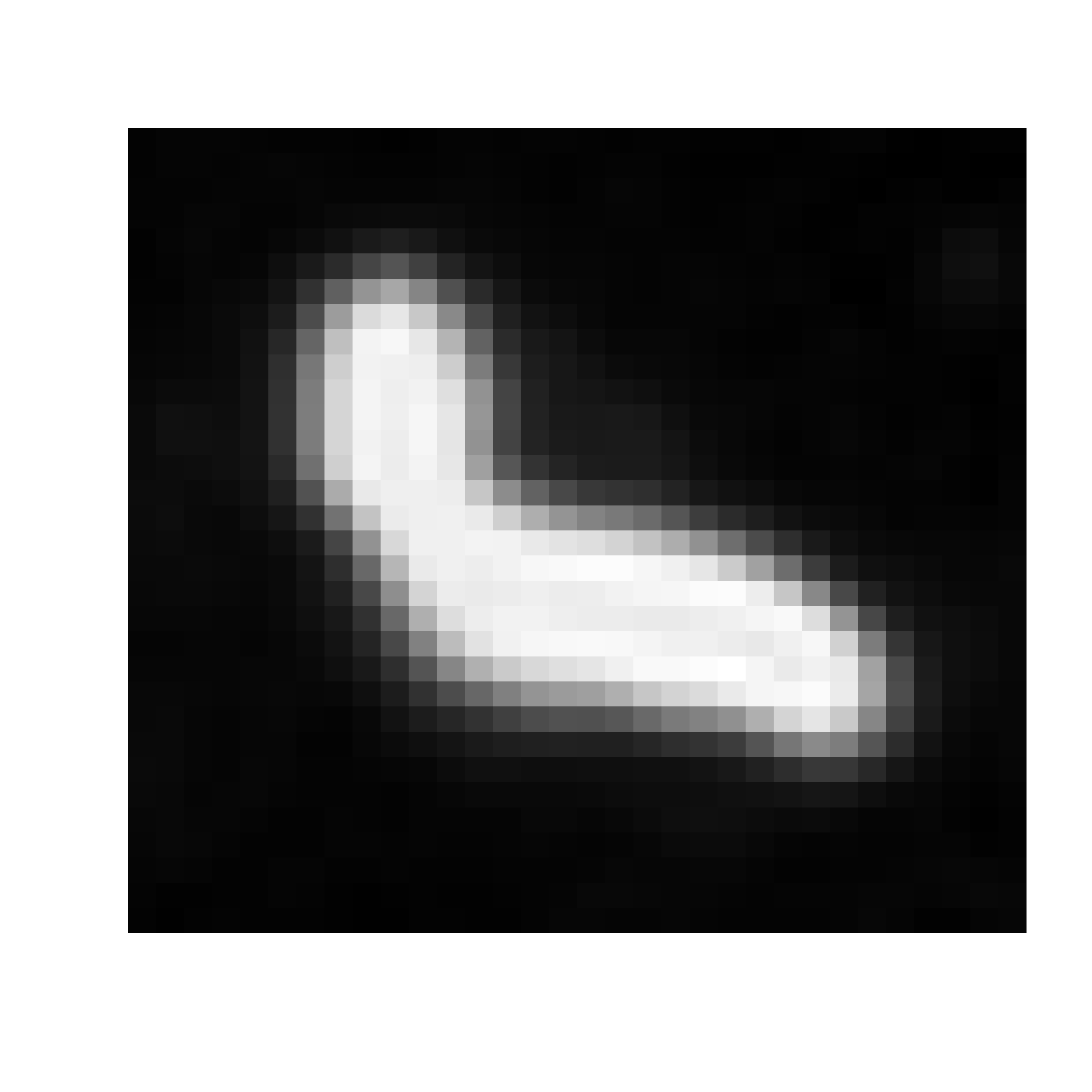}
\includegraphics[width=0.25\textwidth]{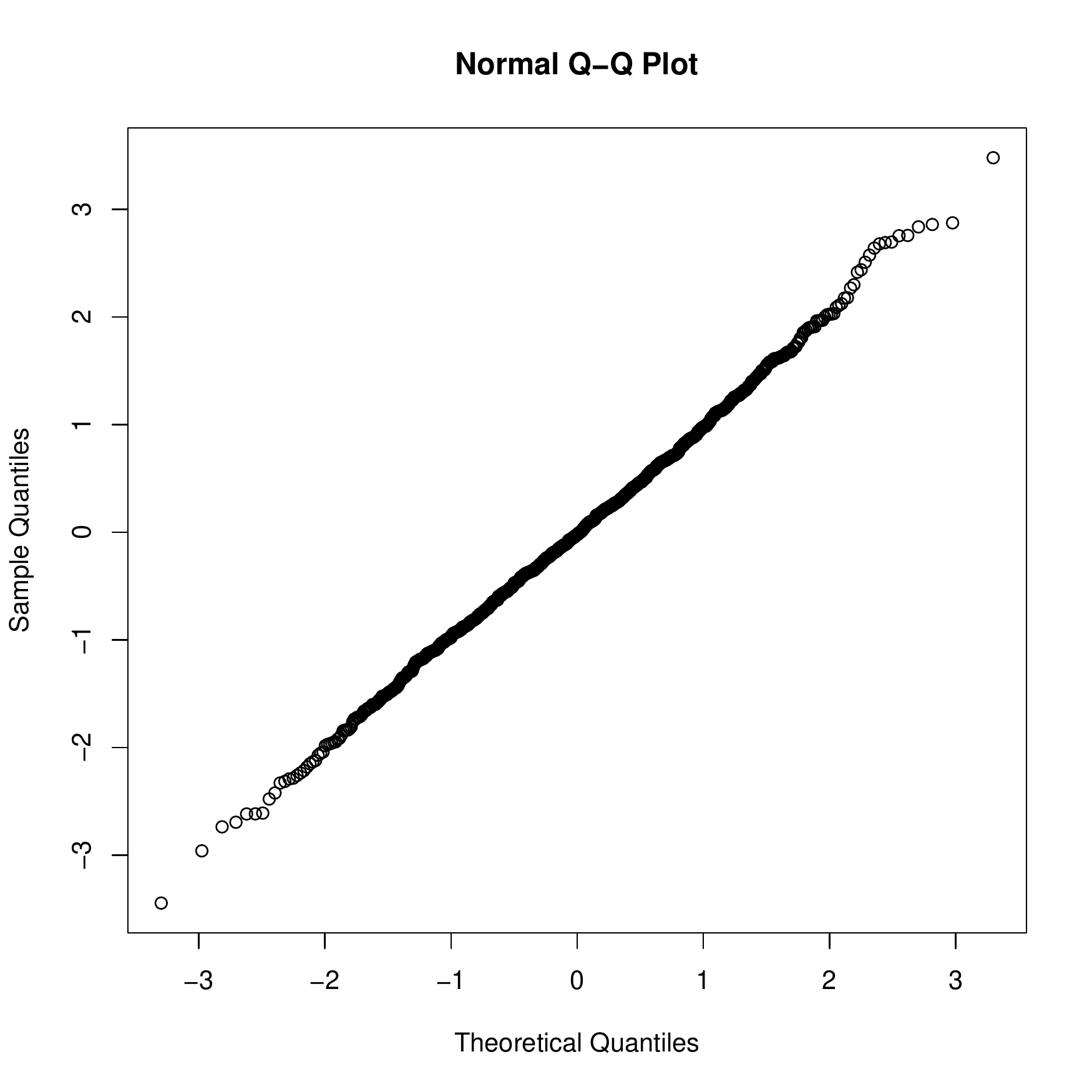}
\caption{{\it From left to right: $32 \times 32$ pixel section of the HeLa
image data rendered in grayscale, its reconstructed version
(grayscale), a normal QQ-plot of the resulting standardized regression
residuals.}}
\label{HeLa_small}
\end{figure}

\begin{figure}
\includegraphics[width=0.25\textwidth]{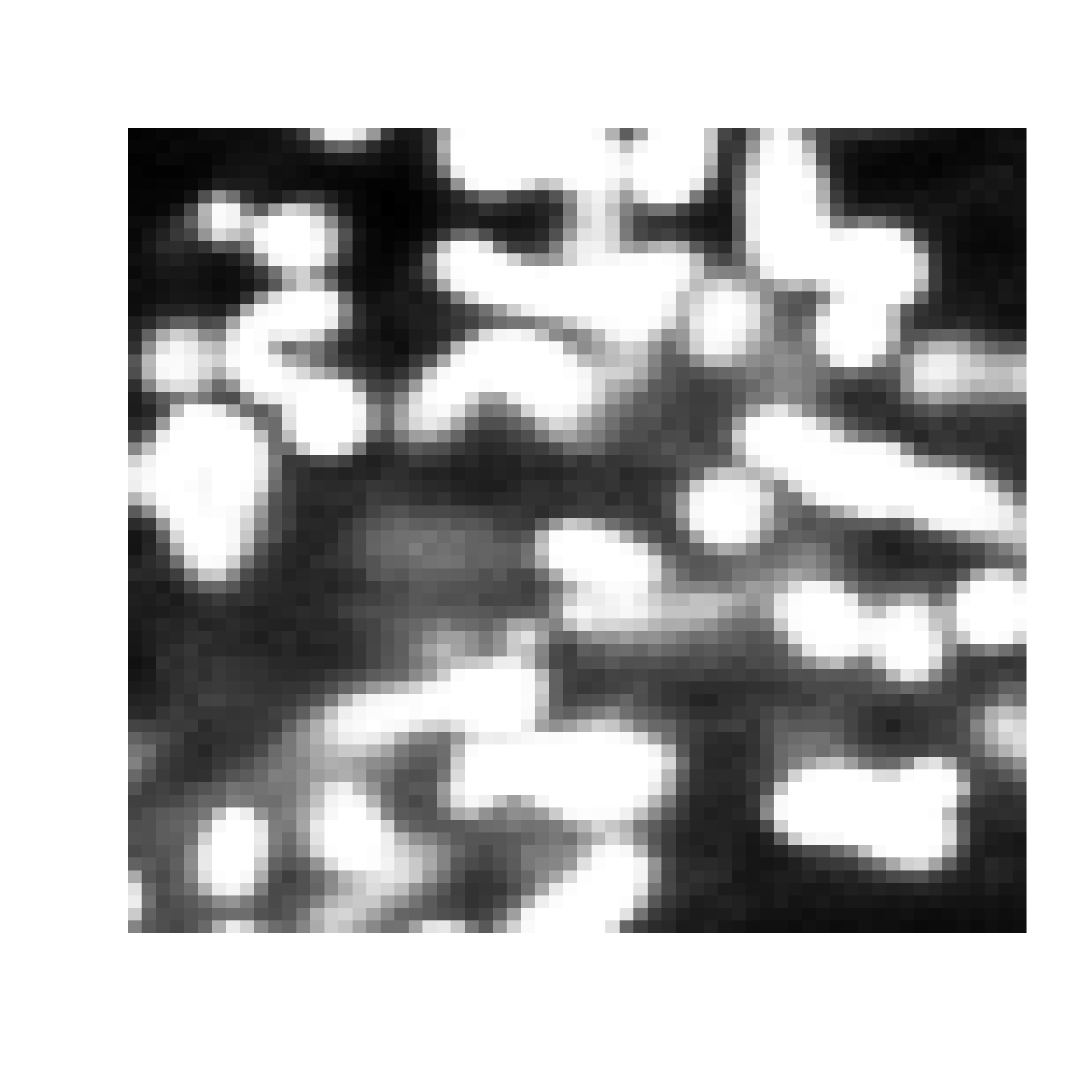}
\includegraphics[width=0.25\textwidth]{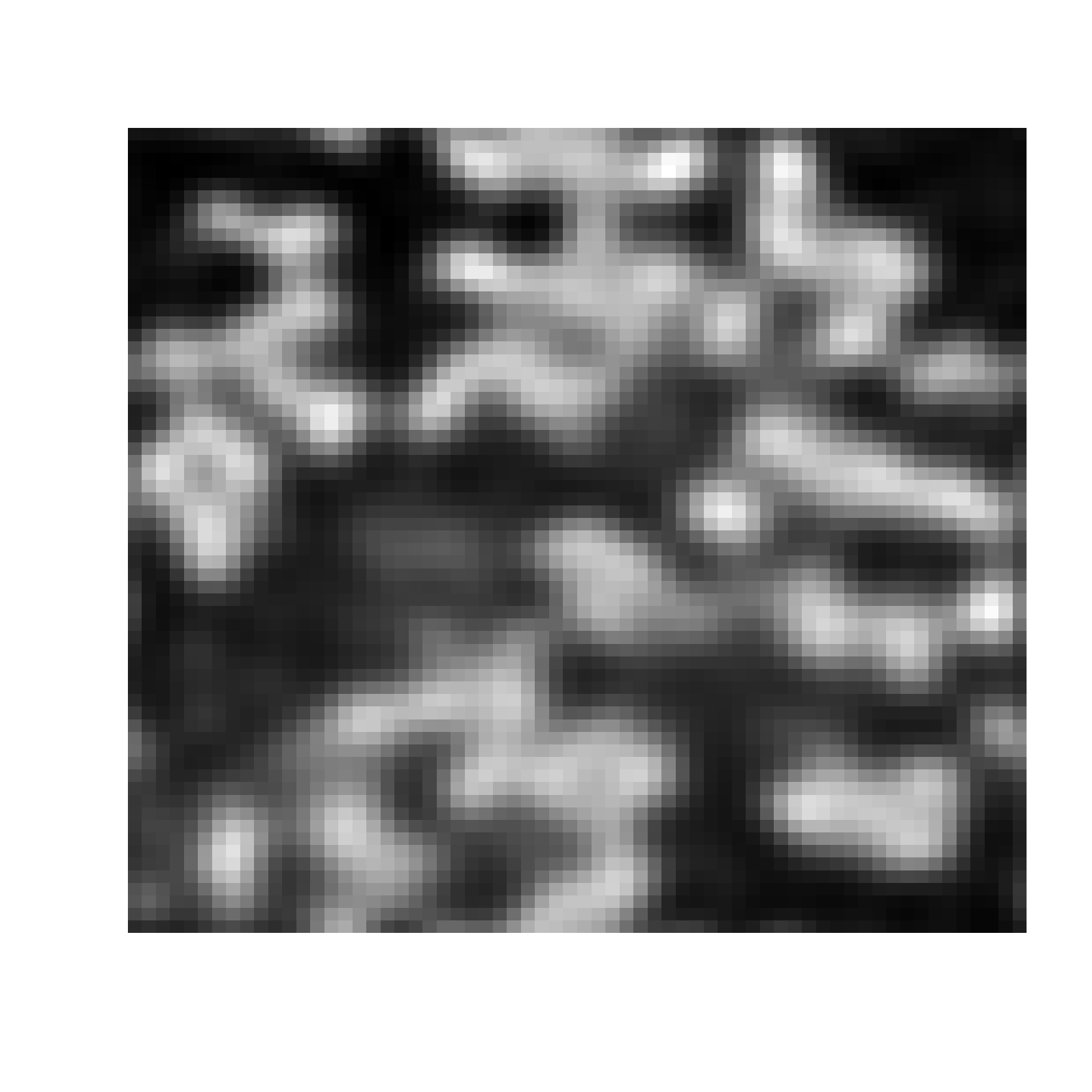}
\includegraphics[width=0.25\textwidth]{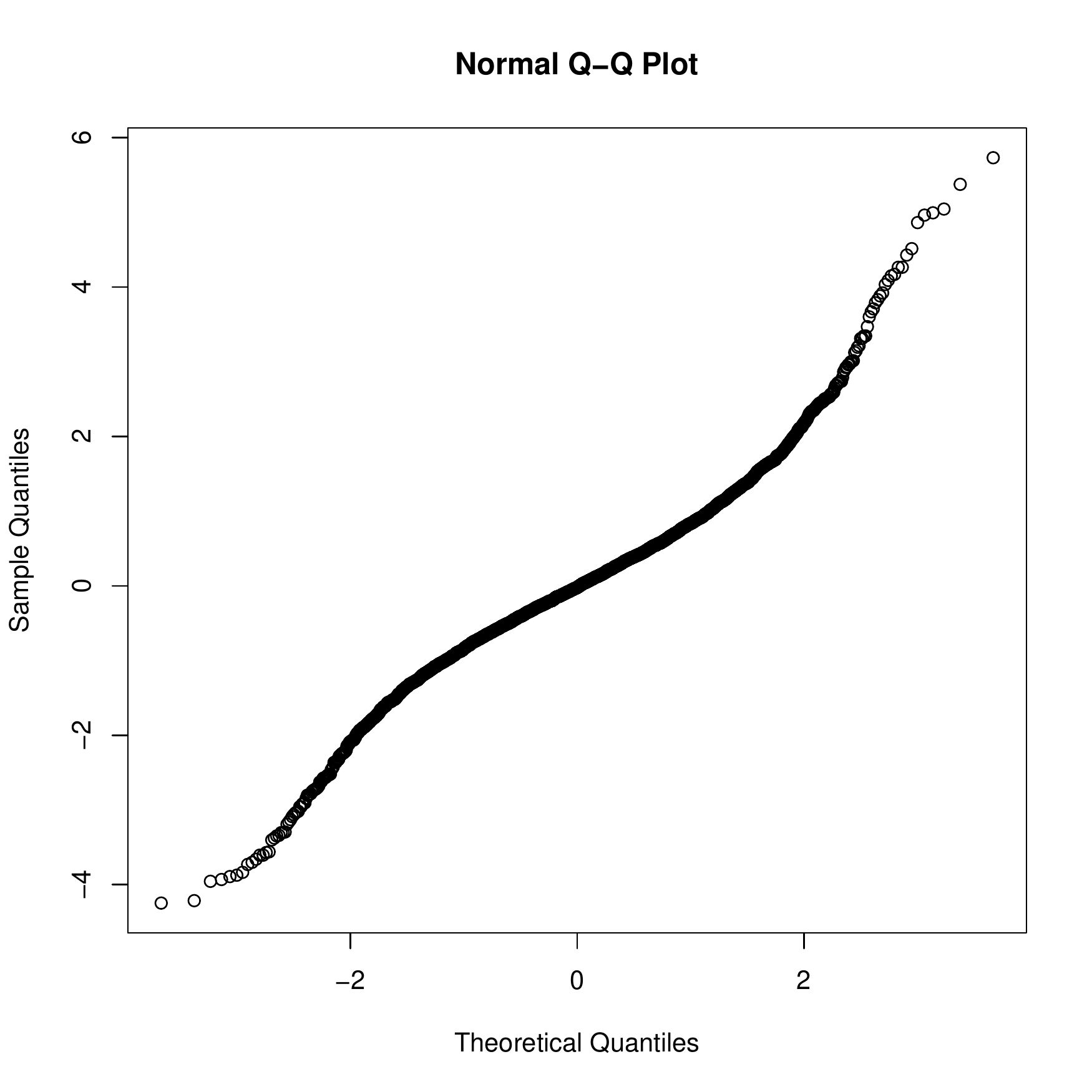}
\caption{{\it From left to right: $64 \times 64$ pixel section of the HeLa
image data rendered in grayscale, its reconstructed version
(grayscale), a normal QQ-plot of the resulting standardized regression
residuals.}}
\label{HeLa_medium}
\end{figure}

Beginning with the first and smaller image, the martingale transform
test statistic $T_0$ that assesses the goodness-of-fit of a normal
distribution has value $1.5141$, which is smaller than $2.2414$, and
the null hypothesis of normally distributed errors is not
rejected. Inspecting the QQ-plot of these standardized residuals it
appears that the assumption of normally distributed errors is
appropriate, which confirms our previous finding. In this case, we can
see the reconstruction very closely mirrors the original.

Turning now to the second and larger image, the value of the test
statistic is $39.8324$, which is much larger than $2.2414$, and we
reject the null hypothesis of normally distributed errors. The QQ-plot
of the standardized residuals now appears to contain systematic
deviation from normality, which confirms that the hypothesis of the
normally distributed errors is inappropriate. Here we can see the
reconstruction is now not as accurate as it was for the previous
case. In conclusion, we can see the approach of using the proposed
test statistic $T_0$ for assessing convenient forms of the error
distribution is useful.

\medskip
\medskip


{\bf Acknowledgements}
This work has been supported in part by the Collaborative Research
Center ``Statistical modeling of nonlinear dynamic processes'' (SFB
823, Project C1) of the German Research Foundation (DFG) and
in part by the Bundesministerium f\"ur Bildung und Forschung through
the project ``MED4D: Dynamic medical imaging: Modeling and analysis of
medical data for improved diagnosis, supervision and drug
development''. The authors would also like to thank Kathrin Bissantz
for providing us the HeLa cells image used in our data example.


\section{Appendix}
\label{appendix}

In this section we give the technical details supporting our
results. We have the following uniform convergence property for the
density estimator $\hg$.
\begin{lemma} \label{lem_hg}
Let the Fourier smoothing kernel $\Lambda$ be as in Assumption
\ref{assumpLambda}, and let
Assumption \ref{assump_g} hold with $s >0$. Then, for any
smoothing parameter sequence $\{c_n\}_{n \geq 1}$ satisfying
$(nc_n^m)^{-1}\log(n) \to 0$ as $c_n \to 0$ with $n \to \infty$,
\begin{equation} \label{lem1}
\sup_{\xvec \in \CC} \Big| \hg(\xvec) - g(\xvec) \Big|
 = O\big( c_n^s + (nc_n^m)^{-1/2}\log^{1/2}(n) \big),
 \quad\text{a.s.}
\end{equation}
\end{lemma}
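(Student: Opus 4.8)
The plan is to split the estimation error into a deterministic bias and a centered stochastic term,
\[
\hg(\xvec) - g(\xvec) = \big( E\hg(\xvec) - g(\xvec) \big) + \big( \hg(\xvec) - E\hg(\xvec) \big),
\]
and to show that the first term is $O(c_n^s)$ uniformly in $\xvec$ while the second is $O((nc_n^m)^{-1/2}\log^{1/2}(n))$ almost surely, uniformly in $\xvec$. For the bias, using independence of the $X_j$ and the definition of the Fourier coefficients $\phi_g(\kvec) = E[e^{-i2\pi\kvec\cdot X}]$ of $g$, I would compute $E\hg(\xvec) = \sum_{\kvec\in\Zint^m}\Lambda(c_n\kvec)\phi_g(\kvec)e^{i2\pi\kvec\cdot\xvec}$ and subtract the Fourier expansion $g(\xvec) = \sum_{\kvec}\phi_g(\kvec)e^{i2\pi\kvec\cdot\xvec}$. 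By Assumption \ref{assumpLambda} the coefficient $\Lambda(c_n\kvec)$ equals $1$ whenever $\|\kvec\|\le c_n^{-1}$, so only the frequencies $\|\kvec\|>c_n^{-1}$ remain; bounding $|\Lambda(c_n\kvec)-1|\le 2$ and inserting the factor $\|\kvec\|^{-s}\le c_n^s$ then gives $|E\hg(\xvec) - g(\xvec)| \le 2c_n^s\sum_{\kvec}\|\kvec\|^s|\phi_g(\kvec)|$, which is $O(c_n^s)$ uniformly in $\xvec$ because $g\in\MM(s)$.

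For the stochastic term I would first record three properties of the smoothing weight $W_{c_n}$. Summing the coefficients directly gives the sup bound $\|W_{c_n}\|_\infty \le \sum_{\kvec}|\Lambda(c_n\kvec)| = O(c_n^{-m})$; differentiating the Fourier series in \eqref{weights} term by term gives the Lipschitz bound $\sup_{\xvec}\|\nabla W_{c_n}(\xvec)\| \le 2\pi\sum_{\kvec}\|\kvec\|\,|\Lambda(c_n\kvec)| = O(c_n^{-m-1})$; and Parseval's identity together with $\sup_\xvec g(\xvec)<\infty$ from Assumption \ref{assump_g} yields the variance bound $\Var(W_{c_n}(\xvec - X)) \le \sup_\uvec g(\uvec)\sum_{\kvec}|\Lambda(c_n\kvec)|^2 = O(c_n^{-m})$. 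In each case the integrability conditions on $\Lambda$ and on $\uvec\mapsto\|\uvec\|\,|\Lambda(\uvec)|$ in Assumption \ref{assumpLambda} let me compare the lattice sums over $\Zint^m$ with the corresponding Riemann sums of mesh $c_n$, which are bounded by $c_n^{-m}\int_{\RR^m}|\Lambda|$, $c_n^{-m-1}\int_{\RR^m}\|\uvec\|\,|\Lambda(\uvec)|\,d\uvec$ and $c_n^{-m}\int_{\RR^m}|\Lambda|^2$, respectively.

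With these bounds in hand I would control the supremum by a discretize-and-concentrate argument. Cover $\CC$ by a grid of mesh $\delta_n$ (chosen to be a small negative power of $n$), consisting of $O(\delta_n^{-m})$ points. Between adjacent grid points the oscillation of $\hg - E\hg$ is at most $O(\delta_n c_n^{-m-1})$ by the Lipschitz bound, which is negligible relative to the target rate for a suitable choice of $\delta_n$. At each fixed grid point the summands $W_{c_n}(\xvec - X_j) - E[W_{c_n}(\xvec - X)]$ are i.i.d., centered, bounded in absolute value by an envelope $M = O(c_n^{-m})$, and have variance $O(c_n^{-m})$, so Bernstein's inequality applies at deviation level $t = C(nc_n^m)^{-1/2}\log^{1/2}(n)$. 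The hypothesis $(nc_n^m)^{-1}\log(n)\to 0$ forces the term $Mt$ in Bernstein's denominator to be dominated by the variance proxy, so the exponent reduces to a constant multiple of $C^2\log(n)$ and the tail probability is $O(n^{-c'C^2})$. A union bound over the polynomially many grid points, with $C$ taken large enough that the resulting probabilities are summable, followed by the Borel--Cantelli lemma, yields the claimed almost sure uniform rate $O((nc_n^m)^{-1/2}\log^{1/2}(n))$; combining this with the bias estimate completes the proof.

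The main obstacle is the uniform stochastic control in the last step: the mesh $\delta_n$, the Bernstein deviation level, and the Borel--Cantelli summability must be calibrated simultaneously so that the inter-grid oscillation is absorbed, the factor $\log^{1/2}(n)$ emerges with the right power, and the exceptional-set probabilities remain summable in $n$. Making the passages from the lattice sums to the Riemann integrals in the three kernel bounds fully rigorous (justifying the orders $O(c_n^{-m})$, $O(c_n^{-m-1})$ and $O(c_n^{-m})$ from the integrability of $\Lambda$) is the other point requiring care, whereas the bias estimate is essentially algebraic.
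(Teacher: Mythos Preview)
Your proposal is correct and follows essentially the same architecture as the paper's proof: the identical bias estimate via the Fourier expansion and the property $\Lambda(c_n\kvec)=1$ for $\|\kvec\|\le c_n^{-1}$, and the same discretize--Bernstein--Borel--Cantelli scheme for the stochastic term. The one point of difference is the oscillation control between grid points: the paper first establishes the separate almost-sure bound $\max_{\kvec\in\Zint^m}|\hphig(\kvec)-\phi_g(\kvec)|=O(n^{-1/2}\log^{1/2}(n))$ via another Bernstein argument and combines it with the Lipschitz property of the complex exponentials (allowing the coarser mesh $s_n=O(c_n^{m/2+1})$), whereas you bound the oscillation deterministically through the Lipschitz constant of $W_{c_n}$ itself, which forces a finer mesh but avoids the auxiliary step; your route is slightly more direct, while the paper's intermediate device is a pattern it reuses in later proofs.
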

\begin{proof}
Write
\begin{equation*}
E\big[ \hg(\xvec) \big] - g(\xvec) =
 \sum_{\kvec \in \Zint^m} \big\{ \Lambda(c_n\kvec) - 1 \big\}
 \phi_g(\kvec) e^{ i 2\pi \kvec \cdot \xvec },
 \quad \xvec \in \CC,
\end{equation*}
(and note that $|\Lambda(c_n\kvec) - 1| = 0$ whenever $\|\kvec\| \leq
c_n^{-1}$) to see that
\begin{equation*}
\sup_{\xvec \in \CC} \Big| E\big[ \hg(\xvec) \big] - g(\xvec) \Big| \leq
 2 c_n^s \sum_{\kvec \in \Zint^m} \|\kvec\|^s|\phi_g(\kvec)|
 = O\big( c_n^s \big).
\end{equation*}

Using the representation $L_\Lambda(\xvec) = \sum_{\kvec \in \Zint^m}
\Lambda(\kvec) e^{i 2\pi \kvec \cdot \xvec}$ and the fact that
$\{\Lambda(c_n\kvec)\}_{\kvec \in \Zint^m}$ are the Fourier
coefficients of the function $L_\Lambda(\cdot/c_n)/c_n^m$ we obtain
\begin{equation*}
\hg(\xvec) - E\big[ \hg(\xvec) \big] =
 \frac1{nc_n^m} \sum_{j = 1}^n \bigg\{
 L_\Lambda\bigg( \frac{\xvec - X_j}{c_n} \bigg) -
 E\bigg[ L_\Lambda\bigg( \frac{\xvec - X}{c_n} \bigg) \bigg] \bigg\},
 \quad \xvec \in \CC.
\end{equation*}
One calculates directly that
\begin{equation} \label{K_Lambda_variance}
\Var \bigg[ c_n^{-m} L_\Lambda \bigg( \frac{\xvec - X}{c_n} \bigg) \bigg]
 = O\big( c_n^{-m} \big),
 \quad \xvec \in \CC.
\end{equation}
In addition, $L_\Lambda$ is bounded and therefore
\begin{equation} \label{K_Lambda_bound}
c_n^{-m} \sup_{\xvec \in \CC} \bigg|
 L_\Lambda \bigg( \frac{\xvec - X_j}{c_n} \bigg) -
 E\bigg[ L_\Lambda \bigg( \frac{\xvec - X}{c_n} \bigg) \bigg] \bigg|
 = O\big( c_n^{-m} \big),
 \quad j = 1,\ldots,n.
\end{equation}

To continue, let $\{s_n\}_{n \geq 1}$ be a sequence of positive real
numbers satisfying $s_n = O(c_n^{m/2 + 1}) = o(1)$ and partition $\CC$
into parts $\CC_i$ with associated centers $\xvec_i$ ($i =
1,\ldots,O(s_n^{-m})$) such that $\max_{i = 1,\ldots,O(s_n^{-m})}
\sup_{\xvec \in \CC_i} \|\xvec - \xvec_i\| \leq s_n$. The assertion
\eqref{lem1} follows from the arguments above and by additionally
showing that $\max_{i = 1,\ldots,O(s_n^{-m})}|\hg(\xvec_i) -
E[\hg(\xvec_i)]| = O((nc_n^m)^{-1/2}\log^{1/2}(n))$ and $\max_{i =
1,\ldots,O(s_n^{-m})}\sup_{\xvec \in \CC_i}|\hg(\xvec) - E[\hg(\xvec)] -
\hg(\xvec_i) + E[\hg(\xvec_i)]| = O((nc_n^m)^{-1/2}\log^{1/2}(n))$,
almost surely.

Combining \eqref{K_Lambda_variance} and \eqref{K_Lambda_bound} with
Bernstein's inequality (see, for example, Section 2.2.2 of van der
Vaart and Wellner, 1996), one chooses a large enough positive constant
$C$ (through the choice of the quantity
$O((nc_n^m)^{-1/2}\log^{1/2}(n))$) such that
\begin{equation*}
P\bigg( \max_{i = 1,\ldots,O(s_n^{-m})}
 \big| \hg(\xvec_i) - E\big[ \hg(\xvec_i) \big] \big| >
 O\big( (nc_n^m)^{-1/2}\log^{1/2}(n) \big) \bigg)
 \leq O\big( s_n^{-m} n^{-C} \big)
\end{equation*}
is summable in $n$. Since $O( s_n^{-m} n^{-C} ) = O( (n^C c_n^{m^2/2 +
m})^{-1} )$, this occurs when $C > m/2 + 2$ and we have
\begin{equation} \label{hg_pointwise_consistent}
\max_{i = 1,\ldots,O(s_n^{-m})} \Big|
 \hg(\xvec_i) - E\big[ \hg(\xvec_i) \big] \Big|
 = O\big( (nc_n^m)^{-1/2}\log^{1/2}(n) \big),
 \quad\text{a.s.}
\end{equation}

We will now demonstrate that $\max_{\kvec \in \Zint^m}
|\hphig(\kvec) - \phi_g(\kvec)| = O(n^{-1/2}\log^{1/2}(n))$, almost
surely. Let $\kvec \in \Zint^m$ be arbitrary and write
\begin{equation*}
\hphig(\kvec) - \phi_g(\kvec) = \frac1n \sum_{j = 1}^n
 \Big\{ \exp(i2\pi\kvec\cdot X_j)
 - E\big[ \exp(i2\pi\kvec\cdot X) \big] \Big\},
\end{equation*}
where $X$ is a generic random variable with distribution characterized
by the density function $g$. The complex exponential functions are
bounded in absolute value by $1$, and it is easy to verify that
$\Var[\exp(i2\pi\kvec\cdot X)] \leq 1$. As above, use Bernstein's
inequality choosing a large enough positive constant $C$ (through
the choice of the quantity $O(n^{-1/2}\log^{1/2}(n))$) to find that
\begin{equation*}
P\bigg( \bigg| \frac1n \sum_{j = 1}^n
 \Big\{ \exp( i 2\pi \kvec \cdot X_j )
 - E\big[ \exp( i 2\pi \kvec \cdot X ) \big] \Big\} \bigg|
 > O\big(n^{-1/2}\log^{1/2}(n)\big) \bigg) \leq O\big(n^{-C}\big)
\end{equation*}
is summable in $n$. This occurs when $C > 1$, independent of
$\kvec$. It follows that $\max_{\kvec \in \Zint^m} |\hphig(\kvec) -
\phi_g(\kvec)| = O(n^{-1/2}\log^{1/2}(n))$, almost surely.

Further, let $\CC_i$ be arbitrary. For any $\xvec \in \CC_i$ it
follows that
\begin{equation} \label{hg_Lipschitz_remainder}
\hg(\xvec) - E\big[ \hg(\xvec) \big]
 - \hg(\xvec_i) + E\big[ \hg(\xvec_i) \big] =
 \sum_{\kvec \in \Zint^m} \Lambda(c_n\kvec)
 \big\{ \hphig(\kvec) - \phi_g(\kvec) \big\}
 \Big\{ e^{i 2\pi \kvec \cdot \xvec} - e^{i 2\pi \kvec \cdot \xvec_i} \Big\}.
\end{equation}
Now use Euler's formula to write
\begin{equation*}
\exp \big( -i 2\pi \kvec \cdot \xvec \big)
 = \cos \big( 2\pi \kvec \cdot \xvec \big)
 - i \sin \big( 2\pi \kvec \cdot \xvec \big),
\end{equation*}
and (using that sine and cosine are Lipschitz functions with constant
equal to one) derive the bound
\begin{equation} \label{Fourier_rotation_bound}
\Big| \exp\big( -i 2\pi \kvec \cdot \xvec \big)
 - \exp\big( -i 2\pi \kvec \cdot \xvec_i \big) \Big|
 \leq 2^{3/2} \pi \|\kvec\|\|\xvec - \xvec_i\|,
 \quad \xvec \in \CC_i.
\end{equation}
Combining \eqref{Fourier_rotation_bound} with
\eqref{hg_Lipschitz_remainder} there is a positive
constant $C > 0$ such that
\begin{align} \label{hg_Lipschitz}
&\max_{i = 1,\ldots,O(s_n^{-m})}
\sup_{\xvec \in \CC_i} \Big|
 \hg(\xvec) - E[\hg(\xvec)] - \hg(\xvec_i) + E[\hg(\xvec_i)]
 \Big| \\
&\leq C (c_n^{m + 1})^{-1}\max_{\kvec \in \Zint^m}
 \Big| \hphig(\kvec) - \phi_g(\kvec) \Big|
 \max_{i = 1,\ldots,O(s_n^{-m})} \sup_{\xvec \in \CC_i} \|\xvec - \xvec_i\|
 \bigg\{ c_n^m \sum_{\kvec \in \Zint^m} \|c_n\kvec\|
 \big| \Lambda(c_n\kvec) \big| \bigg\} \\ \nonumber
&= O\big( (c_n^{m + 1})^{-1} s_n n^{-1/2}\log^{1/2}(n) \big)
 = O\big( (nc_n^m)^{-1/2}\log^{1/2}(n) \big),
\end{align}
almost surely, since $c_n^m \sum_{\kvec \in \Zint^m}
\|c_n\kvec\||\Lambda(c_n\kvec)| \to
\int_{\RR^m}\,\|\uvec\||\Lambda(\uvec)|\,d\uvec < \infty$ by
Assumption \ref{assumpLambda}.
\end{proof}

With the results of Lemma \ref{lem_hg} we can state a result on the
asymptotic order of the estimated coefficients $\{\Rhat(\kvec)\}_{\kvec
\in \Zint^m}$, which now depend on the density estimator $\hg$.

\begin{lemma} \label{lem_Rhat}
Let $\theta \in \MM(s_0)$ for some $s_0 > 0$, and assume that the errors
$\ve_1,\ldots,\ve_n$ have a finite absolute moment of order $\kappa >
2$. Let the Fourier smoothing kernel $\Lambda$ be as in Assumption
\nolinebreak \ref{assumpLambda}, and let
Assumption \ref{assump_g} hold for some $s > 0$. Choose the sequence
of smoothing parameters $\{c_n\}_{n \geq 1}$ such that
$(nc_n^m)^{-1}\log(n) \to 0$ and $n^{-1/2}\log^{1/2}(n) = o(c_n^s)$
with $c_n \to 0$ as $n \to \infty$. Then
\begin{equation*}
\max_{\kvec \in \Zint^m} \Big| \Rhat(\kvec) - R(\kvec) \Big|
 = O\big( c_n^s + (nc_n^m)^{-1/2}\log^{1/2}(n) \big),
 \quad\text{a.s.}
\end{equation*}
\end{lemma}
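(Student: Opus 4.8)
The plan is to control $\Rhat(\kvec) - R(\kvec)$ uniformly in $\kvec$ by first replacing the estimated density $\hg$ in the denominator with the true density $g$, and then handling the resulting ``oracle'' deviation. Writing
\begin{equation*}
\Rhat(\kvec) - R(\kvec)
 = \frac1n \sum_{j = 1}^n \frac{Y_j}{\hg(X_j)} e^{-i2\pi\kvec\cdot X_j}
 - E\bigg[ \frac{Y}{g(X)} e^{-i2\pi\kvec\cdot X} \bigg],
\end{equation*}
I would split this into
\begin{equation*}
\frac1n \sum_{j = 1}^n Y_j e^{-i2\pi\kvec\cdot X_j}
 \bigg\{ \frac{1}{\hg(X_j)} - \frac{1}{g(X_j)} \bigg\}
 + \bigg\{ \frac1n \sum_{j = 1}^n \frac{Y_j}{g(X_j)} e^{-i2\pi\kvec\cdot X_j}
 - R(\kvec) \bigg\},
\end{equation*}
calling these the \emph{density-error term} and the \emph{oracle term}, respectively.

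\textbf{The oracle term.}
For the second bracket, note that $Y_j/g(X_j) = [K\theta](X_j)/g(X_j) + \ve_j/g(X_j)$, and each summand has mean $R(\kvec)$ by the representation stated before the lemma. Since $|e^{-i2\pi\kvec\cdot X_j}| = 1$ and $g$ is bounded away from zero by Assumption \ref{assump_g}, the summands are centered with variance bounded uniformly in $\kvec$ (using the finite moment of order $\kappa > 2$ for the $\ve_j$ contribution). Here the complex exponential is \emph{not} bounded-variation-free in $\kvec$, so I cannot appeal to a single Bernstein bound simultaneously over all $\kvec$; instead I would apply Bernstein's inequality for a fixed $\kvec$ exactly as in the proof of Lemma \ref{lem_hg} (via a truncation argument to accommodate the heavy-tailed $\ve_j$), obtaining a rate $n^{-1/2}\log^{1/2}(n)$ with an exponential tail, and then take a union bound. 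The key subtlety is that $\kvec$ ranges over the infinite lattice $\Zint^m$, so the maximum over $\kvec$ must be tamed; but the relevant coefficients are effectively supported on $\|\kvec\| \leq c_n^{-1}$ through the smoothing kernel $\Lambda(c_n\kvec)$ entering $\hKtheta$, giving only $O(c_n^{-m})$ active frequencies, and the $n^{-C}$ tail from Bernstein dominates this polynomial count. This yields a contribution of order $(nc_n^m)^{-1/2}\log^{1/2}(n)$ after accounting for the $c_n^{-m}$ union bound.

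\textbf{The density-error term and the main obstacle.}
The first bracket is where the real work lies. I would write $1/\hg - 1/g = (g - \hg)/(\hg\, g)$ and use that $\inf_\CC g > 0$ together with Lemma \ref{lem_hg}, which guarantees $\hg$ is uniformly close to $g$ and hence also eventually bounded away from zero almost surely; this makes $1/(\hg\,g)$ uniformly bounded. Thus the term is controlled by
\begin{equation*}
\sup_{\xvec \in \CC} \big| \hg(\xvec) - g(\xvec) \big|
 \cdot \frac1n \sum_{j = 1}^n \frac{|Y_j|}{\inf_\CC(\hg\, g)},
\end{equation*}
and the empirical average of $|Y_j|$ is $O(1)$ almost surely by the law of large numbers (using the finite first moment of $\ve_j$ and boundedness of $K\theta$). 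Lemma \ref{lem_hg} then delivers $\sup_\CC|\hg - g| = O(c_n^s + (nc_n^m)^{-1/2}\log^{1/2}(n))$ uniformly, and crucially this bound is \emph{free of $\kvec$}, so it passes through the maximum over $\kvec$ without cost. The hard part is precisely ensuring the uniformity over $\kvec$ in the oracle term survives against the infinite lattice: I expect this to hinge on the observation that only frequencies with $\|\kvec\| \leq c_n^{-1}$ matter, controlling the union bound, combined with a careful truncation of the heavy-tailed errors so that Bernstein's inequality applies with the stated moment condition $\kappa > 2$. Summing the two contributions gives the claimed rate $O(c_n^s + (nc_n^m)^{-1/2}\log^{1/2}(n))$, and the side condition $n^{-1/2}\log^{1/2}(n) = o(c_n^s)$ ensures the density-error term does not dominate.
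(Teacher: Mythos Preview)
Your decomposition is essentially the paper's, just coarser: the paper splits further by writing $Y_j = [K\theta](X_j) + \ve_j$, so your oracle term is the paper's $T_1 + T_2$ and your density-error term is $T_3 + T_4$. Your treatment of the density-error piece via Lemma~\ref{lem_hg} (bounding $|1/\hg - 1/g|$ uniformly and noting the result is free of $\kvec$) is exactly how the paper handles $T_3$, and is fine.

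The oracle term is where your reasoning slips in two places. First, the lemma is stated as a bound on $\max_{\kvec \in \Zint^m}|\Rhat(\kvec) - R(\kvec)|$; the kernel $\Lambda$ does not appear in it, so you cannot appeal to $\Lambda(c_n\kvec)$ to truncate the lattice to $\|\kvec\| \leq c_n^{-1}$. That truncation is relevant only downstream, in the proof of Theorem~\ref{thm_htheta_random}, where $\Lambda$ actually enters. The paper does not restrict $\kvec$ here: it argues, as in the analogous step inside Lemma~\ref{lem_hg}, that the Bernstein tail $O(n^{-C})$ is independent of $\kvec$ (the summands are uniformly bounded for $T_1$, and handled by the moment condition for $T_2$), and from this asserts $\max_{\kvec \in \Zint^m}|T_1(\kvec)|,\,\max_{\kvec \in \Zint^m}|T_2(\kvec)| = O(n^{-1/2}\log^{1/2}(n))$ almost surely. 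Second, even if you did union-bound over $O(c_n^{-m})$ frequencies, the cost would be only logarithmic, not the multiplicative $c_n^{-m/2}$ you claim: with identical tails $\exp(-c\,nt^2)$, a union over $N$ events inflates the threshold to $t \asymp n^{-1/2}(\log n + \log N)^{1/2}$, which for polynomial $N$ is still $O(n^{-1/2}\log^{1/2}(n))$, not $(nc_n^m)^{-1/2}\log^{1/2}(n)$.

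Your final bound happens to survive because the rate you quote for the oracle term already sits inside the target $O(c_n^s + (nc_n^m)^{-1/2}\log^{1/2}(n))$, but the route you take to it is not correct as written: drop the appeal to $\Lambda$ and note that the paper's sharper oracle rate $O(n^{-1/2}\log^{1/2}(n))$ is in fact $o(c_n^s)$ by the side condition, so the entire rate in the lemma is carried by the density-error term $T_3$.
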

\begin{proof}
Let $\kvec \in \Zint^m$ be arbitrary and write
\begin{equation*}
\Rhat(\kvec) - R(\kvec) = T_1(\kvec) + T_2(\kvec) + T_3(\kvec) + T_4(\kvec),
\end{equation*}
with
\begin{equation*}
T_1(\kvec) = \frac1n \sum_{j = 1}^n \bigg\{
 \frac{[K\theta](X_j)}{g(X_j)}e^{-i 2\pi \kvec \cdot X_j}
 - E\bigg[ \frac{[K\theta](X)}{g(X)}e^{-i 2\pi \kvec \cdot X} \bigg]
 \bigg\},
\end{equation*}
\begin{equation*}
T_2(\kvec) = \frac1n \sum_{j = 1}^n
 \frac{\ve_j}{g(X_j)}e^{-i 2\pi \kvec \cdot X_j},
\end{equation*}
\begin{equation*}
T_3(\kvec) = \frac1n \sum_{j = 1}^n \big[ K\theta \big](X_j)
 \Big\{ \hg^{-1}(X_j) - g^{-1}(X_j) \Big\}e^{-i 2\pi \kvec \cdot X_j}
\end{equation*}
and
\begin{equation*}
T_4(\kvec) = \frac1n \sum_{j = 1}^n \ve_j
 \Big\{ \hg^{-1}(X_j) - g^{-1}(X_j) \Big\}e^{-i 2\pi \kvec \cdot X_j}.
\end{equation*}
Since $\theta \in \MM(s_0)$ for some $s_0 > 0$, it follows that
$K\theta$ is bounded, and a standard argument shows that
$\max_{\kvec \in \Zint^m} |T_1(\kvec)|$ is of the order
$O(n^{-1/2}\log^{1/2}(n)) = o(c_n^s + (nc_n^m)^{-1/2}\log^{1/2}(n))$,
almost surely. Analogously, $\max_{\kvec \in \Zint^m}
|T_2(\kvec)|$ is of the order $o(c_n^s + (nc_n^m)^{-1/2}\log^{1/2}(n))$,
almost surely. From the result of Lemma \nolinebreak
\ref{lem_hg} we can see that $\max_{\kvec \in \Zint^m} |T_3(\kvec)|$
is of the order $O(c_n^s + (nc_n^m)^{-1/2}\log^{1/2}(n))$, almost
surely. Finally, with some technical effort one shows that
$\max_{\kvec \in \Zint^m}|T_4(\kvec)|$ is of the order $o(c_n^s +
(nc_n^m)^{-1/2}\log^{1/2}(n))$, almost surely.
\end{proof}

We are now ready to state the proof of Theorem
\ref{thm_htheta_random}.
\begin{proof}[{\sc Proof of Theorem \ref{thm_htheta_random}}]
Write
\begin{equation*}
\hKtheta(\xvec) - K\theta(\xvec) =
 \sum_{\kvec \in \Zint^m} \Lambda(c_n\kvec)
 \big\{ \Rhat(\kvec) - R(\kvec) \big\} e^{i 2\pi \kvec \cdot \xvec}
 + \sum_{\kvec \in \Zint^m} \big\{ \Lambda(c_n\kvec) - 1 \big\}
 R(\kvec) e^{i 2\pi \kvec \cdot \xvec},
 \quad \xvec \in \CC.
\end{equation*}
From Lemma \ref{lem_Rhat} and that $c_n^s =
O((nc_n^m)^{-1/2}\log^{1/2}(n))$ it follows for the first term
in the display above to have the order $O(c_n^{s - m}) =
O(c_n^{s_0 + b})$, almost surely, since $s = s_0 + b + m$. The second
term in the same display is not random and easily shown to be of the
order $O(c_n^{s_0 + b})$.

The second assertion follows from showing that $\hKtheta \in \MM(s_0
+ b)$ and combining this fact with the first assertion. The Fourier
coefficients of $\hKtheta$ are given by
\begin{equation*}
\Lambda(c_n\kvec)\Rhat(\kvec) = \Lambda(c_n\kvec) R(\kvec) +
 \Lambda(c_n\kvec)\big\{ \Rhat(\kvec) - R(\kvec) \big\},
 \quad \kvec \in \Zint^m,
\end{equation*}
and we can see that $|\Lambda(c_n\kvec)\Rhat(\kvec)|$ is bounded by
\begin{equation} \label{cTheta_bound}
|R(\kvec)| + \max_{\xi \in \Zint^m}
 \Big| \Rhat(\xi) - R(\xi) \Big||\Lambda(c_n\kvec)|.
\end{equation}
Since $\theta \in \MM(s_0)$ it follows that $\sum_{\kvec
\in \Zint^m} \|\kvec\|^{s_0 + b} |R(\kvec)| = \sum_{\kvec \in \Zint^m}
\|\kvec\|^b|\Psi(\kvec)|\|\kvec\|^{s_0}|\Theta(\kvec)| < \infty$ and
$K\theta \in \MM(s_0 + b)$. This means that we only
need to show that the series condition in the definition of
$\MM(s_0 + b)$ is satisfied for the second term in
\eqref{cTheta_bound}. This series condition results in the quantity
\begin{equation*}
\max_{\xi \in \Zint^m} \Big| \Rhat(\xi) - R(\xi) \Big|
 \sum_{\kvec \in \Zint^m} \|\kvec\|^{s_0 + b}|\Lambda(c_n\kvec)|.
\end{equation*}
We have already used that $\max_{\xi \in \Zint^m} |\Rhat(\xi) -
R(\xi)|$ is of the order $O(c_n^s)$, and by choice of $\Lambda$ the
series in the display above is of the order $O(c_n^{-s_0 - b -
m})$ as in the proof of Lemma \ref{lem_hg}. Combining these findings
we can see that the quantity in the display above is of the order
$O(c_n^{s - s_0 - b - m}) = O(1)$.
\end{proof}

The proof of Theorem \ref{thm_hF_aslin_H0f} follows from the above
results with an additional property of the distorted regression
estimator $\hKtheta$ and an approximation result for the difference
$\shat^2 - \sigma^2$.

\begin{proposition} \label{prop_htheta_expansion}
Choose the Fourier smoothing kernel $\Lambda$ to be radially
symmetric. Then the estimator $\hKtheta$ enjoys the property that
\begin{equation*}
\bigg| \frac1n \sum_{j = 1}^n
 \Big\{ \hKtheta(X_j) - K\theta(X_j) \Big\}
 - \frac1n \sum_{j = 1}^n \ve_j \bigg| = 0.
\end{equation*}
If the assumptions of Theorem \ref{thm_htheta_random} are satisfied
with $s_0 + b > 3m/2$, then the estimator $\shat$ enjoys the property
that
\begin{equation*}
\bigg| \shat^2 - \sigma^2
 - \frac1n \sum_{j = 1}^n \big\{ \ve_j^2 - \sigma^2 \big\} \bigg|
 = o\big( n^{-1/2} \big),
 \quad\text{a.s.}
\end{equation*}
\end{proposition}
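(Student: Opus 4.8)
The plan is to prove the two assertions separately: the first by an exact algebraic cancellation forced by the radial symmetry of $\Lambda$, and the second by expanding $\shat^2$ and reducing to a quadratic expression in the fitting error that is controlled by the uniform rate of Theorem \ref{thm_htheta_random}. For the first identity, radial symmetry of $\Lambda$ makes the smoothing weights even, $W_{c_n}(\uvec)=W_{c_n}(-\uvec)$: reindexing $\kvec\mapsto-\kvec$ in \eqref{weights} and using $\Lambda(-c_n\kvec)=\Lambda(c_n\kvec)$ leaves the sum unchanged. Hence $\frac1n\sum_{j=1}^n W_{c_n}(X_j-X_i)=\frac1n\sum_{j=1}^n W_{c_n}(X_i-X_j)=\hg(X_i)$ by the definition \eqref{cg_def} of $\hg$. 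Substituting the representation \eqref{htheta_random_x} and interchanging the order of summation gives $\frac1n\sum_{j=1}^n\hKtheta(X_j)=\frac1n\sum_{i=1}^n\frac{Y_i}{\hg(X_i)}\big[\frac1n\sum_{j=1}^n W_{c_n}(X_j-X_i)\big]=\frac1n\sum_{i=1}^n Y_i$, so the factor $\hg(X_i)$ cancels exactly. Since $Y_i=K\theta(X_i)+\ve_i$, subtracting $\frac1n\sum_j K\theta(X_j)$ yields $\frac1n\sum_j\{\hKtheta(X_j)-K\theta(X_j)\}=\frac1n\sum_j\ve_j$, which is the first claim with difference identically zero.

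For the second assertion I would write $\Delta_j=\hKtheta(X_j)-K\theta(X_j)$, so that $\hve_j=\ve_j-\Delta_j$ and $\shat^2=\frac1n\sum_j\ve_j^2-\frac2n\sum_j\ve_j\Delta_j+\frac1n\sum_j\Delta_j^2$. It then suffices to show $\frac1n\sum_j\Delta_j^2=o(n^{-1/2})$ and $\frac1n\sum_j\ve_j\Delta_j=o(n^{-1/2})$ almost surely. The square term is immediate from Theorem \ref{thm_htheta_random}: bounding $\frac1n\sum_j\Delta_j^2\le\sup_{\xvec}|\Delta(\xvec)|^2$ and squaring the almost sure rate $n^{-(s_0+b)/(2s_0+2b+3m)}$ (up to logarithmic factors) doubles its exponent, which exceeds $1/2$ in absolute value precisely under the standing hypothesis $s_0+b>3m/2$. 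The crude estimate $|\frac1n\sum_j\ve_j\Delta_j|\le\sup_{\xvec}|\Delta(\xvec)|\cdot\frac1n\sum_j|\ve_j|=O(\sup_{\xvec}|\Delta(\xvec)|)$ is far too weak for the cross term, so cancellation among the errors must be exploited.

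To this end I would split $\Delta_j=a_j+B(X_j)$, where $a_j=\frac1n\sum_i\frac{K\theta(X_i)}{\hg(X_i)}W_{c_n}(X_j-X_i)-K\theta(X_j)$ depends only on the covariates and $B(\xvec)=\frac1n\sum_i\frac{\ve_i}{\hg(X_i)}W_{c_n}(\xvec-X_i)$ carries the error contribution. Conditionally on $X_1,\ldots,X_n$ the weights $a_j$ are fixed while the $\ve_j$ remain independent and centred, so $\frac1n\sum_j\ve_j a_j$ has conditional variance at most $\sigma^2 n^{-1}\sup_{\xvec}|a_j|^2=O(\rho_n^2/n)$, with $\rho_n=\sup_{\xvec}|\Delta(\xvec)|$, and is of size $o(n^{-1/2})$. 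The remaining piece $\frac1n\sum_j\ve_j B(X_j)=n^{-2}\sum_{i,j}\ve_i\ve_j\,\hg^{-1}(X_i)W_{c_n}(X_j-X_i)$ is a quadratic form in the errors. Its diagonal equals $n^{-2}W_{c_n}(0)\sum_j\ve_j^2\hg^{-1}(X_j)$; since $W_{c_n}(0)=\sum_{\kvec}\Lambda(c_n\kvec)=O(c_n^{-m})$ and $\frac1n\sum_j\ve_j^2\hg^{-1}(X_j)=O(1)$ almost surely (using $\inf_{\xvec}g(\xvec)>0$, Lemma \ref{lem_hg}, and the strong law), the diagonal is $O(c_n^{-m}/n)=o(n^{-1/2})$ for the choice \eqref{bw}. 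The off-diagonal sum is a centred, degenerate $U$-statistic whose conditional variance is of order $c_n^{-m}/n^2$, hence of size $O(c_n^{-m/2}/n)=o(n^{-1/2})$.

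The hard part is upgrading these order-of-magnitude estimates for the cross term to the stated almost sure conclusion under only a finite absolute moment of order $\kappa>2$. I would first replace $\hg^{-1}$ by $g^{-1}$ throughout, estimating the remainder via $\sup_{\xvec}|\hg(\xvec)-g(\xvec)|$ from Lemma \ref{lem_hg} (which is of strictly smaller order), and then control the resulting clean weighted sums and degenerate $U$-statistic by truncating the errors at a slowly growing level, bounding higher moments with a Rosenthal- or Marcinkiewicz--Zygmund-type inequality, and applying the Borel--Cantelli lemma along a subsequence together with a monotonicity argument to fill the gaps. This step, passing from convergence in probability to almost sure convergence for the quadratic form with only finitely many moments, is where I expect the main technical difficulty to lie.
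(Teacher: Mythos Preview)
Your argument for the first identity and for the squared remainder $n^{-1}\sum_j\Delta_j^2$ coincides with the paper's. For the cross term $R_{2,n}=n^{-1}\sum_j\ve_j\Delta_j$, however, you take a genuinely different route: you split $\Delta_j$ into a design-only piece $a_j$ and an error-carrying piece $B(X_j)$, handle $n^{-1}\sum_j\ve_ja_j$ by a conditional variance bound, and reduce $n^{-1}\sum_j\ve_jB(X_j)$ to a degenerate $U$-statistic whose diagonal and off-diagonal parts you estimate separately. You then correctly identify the passage from these in-probability rates to the stated almost sure conclusion (under only a moment of order $\kappa>2$) as the main technical hurdle, proposing truncation plus Rosenthal/Marcinkiewicz--Zygmund inequalities and Borel--Cantelli.

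The paper bypasses this entire programme by working in the Fourier domain. Using the expansion $\Delta_j=\sum_{\kvec}\{\Lambda(c_n\kvec)-1\}R(\kvec)e^{i2\pi\kvec\cdot X_j}+\sum_{\kvec}\Lambda(c_n\kvec)\{\Rhat(\kvec)-R(\kvec)\}e^{i2\pi\kvec\cdot X_j}$ (the same one underlying Theorem~\ref{thm_htheta_random}), one factors
\[
|R_{2,n}|\le\max_{\kvec\in\Zint^m}\bigg|\frac1n\sum_{j=1}^n\ve_je^{i2\pi\kvec\cdot X_j}\bigg|\cdot\bigg[\max_{\kvec}\big|\Rhat(\kvec)-R(\kvec)\big|\sum_{\kvec}|\Lambda(c_n\kvec)|+\sum_{\kvec}|\Lambda(c_n\kvec)-1||R(\kvec)|\bigg].
\]
The bracket is $O(c_n^{s_0+b})$ by Lemma~\ref{lem_Rhat} and the arguments already used in the proofs of Lemma~\ref{lem_hg} and Theorem~\ref{thm_htheta_random}. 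The leading factor is $O(n^{-1/2}\log^{1/2}n)$ almost surely by the same ``standard'' exponential-inequality argument invoked for $T_2(\kvec)$ in Lemma~\ref{lem_Rhat}. The product is $o(n^{-1/2})$ a.s.\ directly. Thus the paper obtains the almost sure statement essentially for free by reusing the Fourier-coefficient machinery of Lemmas~\ref{lem_hg}--\ref{lem_Rhat}, whereas your decomposition, while workable, introduces a $U$-statistic whose almost sure control is precisely the step you flag as difficult. A minor point: your bound $\max_j|a_j|=O(\rho_n)$ is not immediate from $\sup_\xvec|\Delta(\xvec)|=O(\rho_n)$, since $a_j$ is only one summand of $\Delta_j$; it would need its own justification (e.g.\ rerunning the proof of Theorem~\ref{thm_htheta_random} on noiseless data).
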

\begin{proof}
Write
\begin{equation*}
\frac1n \sum_{j = 1}^n
 \Big\{ \hKtheta(X_j) - K\theta(X_j) \Big\}
 - \frac1n \sum_{j = 1}^n \ve_j
 = \frac1n \sum_{j = 1}^n Y_j \bigg\{
 \frac{ \sum_{k = 1}^n W_{c_n}( X_k - X_j ) }{
 \sum_{k = 1}^n W_{c_n}( X_j - X_k )}
 - 1 \bigg\}.
\end{equation*}
Since $\Lambda$ is radially symmetric, we have that $W_{c_n}(X_j -
X_k) = W_{c_n}(X_k - X_j)$ for every $1 \leq j,k
\leq n$. One combines this fact with the additional fact that $|Y_j|$
is finite with probability 1 for each $1 \leq j \leq n$ to finish the
proof of the first assertion.

To show the second assertion we need to use the results of Theorem
\ref{thm_htheta_random} as follows. Write
\begin{equation*}
\shat^2 - \sigma^2 - \frac1n \sum_{j = 1}^n
 \big\{ \ve_j^2 - \sigma^2 \big\}
 = R_{1,n} - 2R_{2,n},
\end{equation*}
with
\begin{equation*}
R_{1,n} = \frac1n \sum_{j = 1}^n
 \Big\{ \hKtheta(X_j) - K\theta(X_j) \Big\}^2
\end{equation*}
and
\begin{equation*}
R_{2,n} = \frac1n \sum_{j = 1}^n \ve_j
 \Big\{ \hKtheta(X_j) - K\theta(X_j) \Big\}.
\end{equation*}
Now combine the first result of Theorem \ref{thm_htheta_random} with
$s_0 + b > 3m/2$ to find that $|R_{n,1}| = o(n^{-1/2})$, almost
surely.

To continue, write
\begin{align*}
R_{2,n} &= \sum_{\kvec \in \Zint^m}
 \big\{ \Lambda(c_n\kvec) - 1 \big\} R(\kvec)
 \bigg\{ \frac1n \sum_{j = 1}^n \ve_j e^{i 2\pi \kvec \cdot X_j} \bigg\} \\
&\quad + \sum_{\kvec \in \Zint^m}
 \big\{ \Rhat(\kvec) - R(\kvec) \big\} \Lambda(c_n\kvec)
 \bigg\{ \frac1n \sum_{j = 1}^n \ve_j e^{i 2\pi \kvec \cdot X_j} \bigg\}
\end{align*}
to see that $|R_{2,n}|$ is bounded by
\begin{equation*}
\max_{\kvec \in \Zint^m} \bigg|
 \frac1n \sum_{j = 1}^n \ve_j e^{i 2\pi \kvec \cdot X_j} \bigg| \Bigg[
 \max_{\kvec \in \Zint^m}\Big| \Rhat(\kvec) - R(\kvec) \Big|
 \sum_{\kvec \in \Zint^m} \big| \Lambda(c_n\kvec) \big|
 + \sum_{\kvec \in \Zint^m} \big| \Lambda(c_n\kvec) - 1 \big|
 |R(\kvec)| \Bigg].
\end{equation*}
Analogously to the proof of Lemma \ref{lem_Rhat}, one treats
$\max_{\kvec \in \Zint^m}|n^{-1}\sum_{j = 1}^n \ve_j \exp(i 2\pi \kvec
\cdot X_j)|$ using a standard argument and finds this quantity is of
the order $O(n^{-1/2}\log^{1/2}(n))$, almost surely. For the
quantities inside the large brackets, one uses Lemma \ref{lem_Rhat}
and handles the series term as in the proof of Lemma \ref{lem_hg} to
show that the first term is of the order $O(c_n^{s - m}) = O(c_n^{s_0
+ b})$ (since $s = s_0 + b + m$) and the second term is easily shown
to be of the order $O(c_n^{s_0 + b})$ (see the proof of Lemma
\ref{lem_hg}). Therefore, $|R_{2,n}|$ is of the order $O(c_n^{s_0 + b}
n^{-1/2} \log^{1/2}(n)) = o(n^{-1/2})$, almost surely.
\end{proof}

Neumeyer and Van Keilegom (2010) consider estimation of the
distribution function of the standardized errors using a
residual-based empirical distribution function based on nonparametric
regression residuals obtained by local polynomial smoothing. These
authors obtain asymptotic negligibility of a modulus of continuity
relating their residual-based empirical distribution function to the
empirical distribution function of their regression model errors (see
Lemma A.3 in that article). We obtain a similar result for the
estimator $\hF$ (stated as a proposition below) using analogous
arguments to those of Neumeyer and Van Keilegom (2010). These
arguments have been omitted for brevity.

\begin{proposition} \label{prop_hF_modulus}
Let the assumptions of Theorem \ref{thm_htheta_random} be satisfied
with $s_0 + b > m$. Additionally, assume that $F_*$ admits a
bounded Lebesgue density $f_*$ that satisfies $\sup_{t \in \RR}
|tf_*(t)| < \infty$. Then under the null hypothesis $H_0$ in
\eqref{hyp_null}
\begin{equation*}
\sup_{t \in \RR} \bigg| \hF(t) - \frac1n \sum_{j = 1}^n
 F_* \bigg( t + \frac{\shat - \sigma}{\sigma}t
 + \frac{\hKtheta(X_j) - K\theta(X_j)}{\sigma} \bigg)
 - \frac1n \sum_{j = 1}^n \1[ Z_j \leq t ] + F_*(t) \bigg|
 = \opn.
\end{equation*}
\end{proposition}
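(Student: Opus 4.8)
The plan is to reduce the statement to a stochastic equicontinuity property of the empirical process of the genuine standardized errors $Z_1,\ldots,Z_n$, using the membership conclusion of Theorem \ref{thm_htheta_random} to circumvent the dependence between the errors and the estimators $\hKtheta$ and $\shat$. First I would record the identity that drives everything. Since $\hve_j = \ve_j - \{\hKtheta(X_j) - K\theta(X_j)\}$ and $Z_j = \ve_j/\sigma$,
\begin{equation*}
\1\big[ \hZ_j \leq t \big]
 = \1\Big[ Z_j \leq t + a_j(t) \Big],
 \qquad
 a_j(t) = \frac{\shat - \sigma}{\sigma}\,t
 + \frac{\hKtheta(X_j) - K\theta(X_j)}{\sigma}.
\end{equation*}
Hence the quantity inside the supremum equals
\begin{equation*}
\frac1n \sum_{j=1}^n \Big\{
 \1[Z_j \leq t + a_j(t)] - F_*(t + a_j(t))
 - \1[Z_j \leq t] + F_*(t) \Big\},
\end{equation*}
so it suffices to show that $\sqrt n$ times this increment of the centered empirical process is \op, uniformly in $t$.

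The key device is to replace the random, error-dependent perturbation by a supremum over a fixed parameter set. By Theorem \ref{thm_htheta_random} the function $h := \hKtheta - K\theta$ lies in $\MM_1(s_0+b)$ almost surely with $\|h\|_\infty \to 0$ at the stated rate, while $r := (\shat - \sigma)/\sigma = \op$ by consistency of $\shat$. I would therefore bound the modulus, on an event of probability tending to one, by
\begin{equation*}
\sup_{t \in \RR}\ \sup_{h \in \MM_1(s_0+b)}\ \sup_{|r| \leq \delta_n}
 \bigg| \frac1{\sqrt n} \sum_{j=1}^n \Big\{
 \1[Z_j \leq t + rt + h(X_j)/\sigma] - F_*(t + rt + h(X_j)/\sigma)
 - \1[Z_j \leq t] + F_*(t) \Big\} \bigg|,
\end{equation*}
where $\delta_n \to 0$ dominates $|r|$ with probability tending to one. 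The crucial gain is that for each \emph{fixed} $h$ and $r$ in the supremum, $Z_j$ is genuinely independent of $X_j$, so the summands are centered and the displayed object is an honest empirical process indexed by the fixed set $\RR \times \MM_1(s_0+b) \times [-\delta_n,\delta_n]$. This sidesteps the dependence between the errors and the estimators, which is the usual source of difficulty.

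It then remains to verify stochastic equicontinuity of this process as the index approaches the base point $(t,0,0)$, and two ingredients are needed. For the entropy, the threshold functions $\xvec \mapsto t + rt + h(\xvec)/\sigma$ range over a two-parameter family $(t,r)$ shifted by the ball $\MM_1(s_0+b)$; since $s_0 + b > m$ the series condition defining $\MM(s_0+b)$ forces absolute convergence of the Fourier expansion, so by an Arzel\`a--Ascoli argument $\MM_1(s_0+b)$ is uniformly bounded and relatively compact in $(\CC,\|\cdot\|_\infty)$ with convergent uniform-entropy integral, whence the associated class of indicator functions is of VC--type with integrable entropy. For the $L_2$-radius, independence of $Z$ and $X$ gives
\begin{equation*}
E\Big[\big(\1[Z \leq t + rt + h(X)/\sigma] - \1[Z \leq t]\big)^2\Big]
 \leq \sup_{\xvec \in \CC}
 \big| F_*\big(t + rt + h(\xvec)/\sigma\big) - F_*(t) \big|,
\end{equation*}
and a mean-value bound with $\|f_*\|_\infty < \infty$ and $\sup_t |t f_*(t)| < \infty$ controls the right-hand side by $C(|r| + \|h\|_\infty)$ \emph{uniformly in $t$}. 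A maximal inequality for empirical processes (van der Vaart and Wellner, 1996) then shows the supremum over increments of vanishing $L_2$-radius is \op, which completes the proof.

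The main obstacle I anticipate is precisely the uniform-in-$t$ control of this $L_2$-radius in tandem with the entropy of the threshold class: one must confirm that enlarging the index set from the single random perturbation $(h,r)$ to the whole ball $\MM_1(s_0+b)$ does not destroy the equicontinuity, which is exactly why the smoothness budget $s_0 + b > m$ (guaranteeing a convergent entropy integral for the ball) and the tail condition $\sup_t |t f_*(t)| < \infty$ (taming the multiplicative scale perturbation $rt$ at large $|t|$, where $f_*$ decays) are both imposed.
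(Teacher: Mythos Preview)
Your proposal is correct and is precisely the approach the paper has in mind: the paper omits the proof and refers to Lemma~A.3 of Neumeyer and Van Keilegom (2010), whose argument is exactly the replacement of the random, error-dependent shift by a supremum over a deterministic function class followed by an equicontinuity argument via entropy and $L_2$-increment bounds---indeed, the second conclusion of Theorem~\ref{thm_htheta_random} (membership of $\hKtheta-K\theta$ in $\MM_1(s_0+b)$) is stated for this very purpose. One small refinement: for your $L_2$-radius bound $C(|r|+\|h\|_\infty)$ to actually shrink, the supremum over $h$ should be taken over $\{h\in\MM_1(s_0+b):\|h\|_\infty\leq\eta_n\}$ with $\eta_n$ the uniform rate from Theorem~\ref{thm_htheta_random}, not the full unit ball, since otherwise the increment need not vanish.
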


We are now prepared the state the proof of Theorem
\ref{thm_hF_aslin_H0f}.
\begin{proof}[{\sc Proof of Theorem \ref{thm_hF_aslin_H0f}}]
We introduce the notation
\begin{equation*}
E_n(t) = \frac1n \sum_{j = 1}^n \bigg\{ \1[ Z_j \leq t ] - F_*(t)
 + f_*(t) \bigg( Z_j + t \frac{Z_j^2 - 1}{2} \bigg) \bigg\},
\quad t \in \RR,
\end{equation*}
and write
\begin{equation*}
\hF(t) - F_*(t) - E_n(t)
 = M_n(t) + H_n(t) + L_n(t)
 = D_n(t),
\quad t \in \RR,
\end{equation*}
where the remainder term $D_n(t)$ is equal to the sum of
\begin{equation*}
M_n(t) = \hF(t) - \frac1n \sum_{j = 1}^n
 F_* \bigg( t + \frac{\shat - \sigma}{\sigma}t
 + \frac{\hKtheta(X_j) - K\theta(X_j)}{\sigma} \bigg)
 - \frac1n \sum_{j = 1}^n \1[ Z_j \leq t ] + F_*(t),
\end{equation*}
\begin{align*}
H_n(t) &= \frac1n \sum_{j = 1}^n
 F_* \bigg( t + \frac{\shat - \sigma}{\sigma}t
 + \frac{\hKtheta(X_j) - K\theta(X_j)}{\sigma} \bigg)
 - F_*(t) \\
&\quad - f_*(t) \frac{\sigma^{-1}}{n} \sum_{j = 1}^n
 \Big\{\hKtheta(X_j) - K\theta(X_j)\Big\}
 - tf_*(t) \frac{\shat - \sigma}{\sigma},
\end{align*}
and
\begin{equation*}
L_n(t) = f_*(t) \bigg\{ \frac{\sigma^{-1}}{n} \sum_{j = 1}^n
 \Big\{\hKtheta(X_j) - K\theta(X_j)\Big\}
 - \frac1n \sum_{j = 1}^n Z_j \bigg\}
 + tf_*(t) \bigg\{ \frac{\shat - \sigma}{\sigma}
 - \frac1n \sum_{j = 1}^n \frac{Z_j^2 - 1}{2} \bigg\}.
\end{equation*}
From Proposition \ref{prop_hF_modulus} it follows that
$\sup_{t \in \RR} |M_n(t)| = \opn$. Proposition
\ref{prop_htheta_expansion} in combination with the bounding
conditions on $f_*$ imply that $\sup_{t \in \RR} |L_n(t)| = \opn$
(note that $Z_j = \ve_j/\sigma$, $j = 1,\ldots,n$).

To show that $\sup_{t \in \RR} |H_n(t)| = \opn$ and finish
the proof we need to rewrite $H_n(t) = H_{1,n}(t) + H_{2,n}(t) +
H_{3,n}(t)$, with $H_{1,n}(t)$ equal to
\begin{equation*}
\frac{\sigma^{-1}}{n} \sum_{j = 1}^n
 \Big\{\hKtheta(X_j) - K\theta(X_j)\Big\}
 \int_0^1\,\bigg\{ f_*\bigg( t + \frac{\shat - \sigma}{\sigma} t
 + \frac{\hKtheta(X_j) - K\theta(X_j)}{\sigma} s \bigg)
 - f_*\bigg( t + \frac{\shat - \sigma}{\sigma} t \bigg) \bigg\}\,ds,
\end{equation*}
\begin{equation*}
H_{2,n}(t) = \bigg\{ f_*\bigg( t + \frac{\shat - \sigma}{\sigma} t \bigg)
 - f_*(t) \bigg\} \frac{\sigma^{-1}}{n} \sum_{j = 1}^n
 \Big\{\hKtheta(X_j) - K\theta(X_j)\Big\}
\end{equation*}
and
\begin{equation*}
H_{3,n}(t) = \frac{\shat - \sigma}{\sigma}
 t\int_0^1\,\bigg\{ f_*\bigg( t + \frac{\shat - \sigma}{\sigma} ts \bigg)
 - f_*(t) \bigg\}\,ds.
\end{equation*}
The H\"older continuity of $f_*$ guarantees that
\begin{equation*}
\sup_{t \in \RR} \big| H_{1,n}(t) \big|
 \leq \frac{C_{f_*}}{(1 + \gamma)\sigma^{1 + \gamma}}
 \sup_{\xvec \in \CC} \Big| \hKtheta(\xvec) - K\theta(\xvec) \Big|^{1 + \gamma}
 = o\big(n^{-1/2}\big),
 \quad\text{a.s.,}
\end{equation*}
from Theorem \ref{thm_htheta_random} and that $3m/(2s_0 + 2b) < \gamma
\leq 1$, which is $\opn$ and writing $C_{f_*}$ for the H\"older
constant associated to $f_*$. Proposition \ref{prop_htheta_expansion}
and the uniform continuity of $f_*$ imply that $\sup_{t \in \RR}
|H_{2,n}(t)| = \opn$. Finally, Proposition \ref{prop_htheta_expansion}
and the finite fourth moment assumption guarantees that $\shat$ is a
root-$n$ consistent estimator of $\sigma$, and combining this fact
with the uniform continuity and boundedness of the function $t \mapsto
tf_*(t)$ implies that $\sup_{t \in \RR} |H_{3,n}(t)| = \opn$.
\end{proof}


\end{document}